\pgfplotsset{compat=newest}
\theoremstyle{remark}
\newtheorem{thm}{Theorem}[section]
\newtheorem{lem}[thm]{Lemma}
\newtheorem{prop}[thm]{Proposition}
\newtheorem{cor}{Corollary}
\newtheorem{defn}{Definition}[section]
\newtheorem{exmp}{Example}[section]
\newtheorem{rem}{Remark}
\newcommand{\exclose}{\hfill$\blacksquare$} 
\newcommand{\mv}[1] {{\rm{#1}}} 
\newcommand{\tv}[1]{{\mathbf{#1}}} 
\newcommand{\sv}[1]{{\mathcal{#1}}} 
\newcommand{\tmv}[1]{{\widetilde{\rm{#1}}}} 
\newcommand{\ttv}[1]{{\widetilde{\mathbf{#1}}}} 
\newcommand{\whsv}[1]{{\widehat{\mathcal{#1}}}} 
\date{}
\begin{document}

\title{A Linear Approach to Fault Analysis and Intervention in Boolean Systems}

\author{Anuj~Deshpande~and~Ritwik~Kumar~Layek 
\thanks{The authors are with the Department of Electronics and Electrical Communication Engineering, Indian Institute of Technology - Kharagpur, Kharagpur, WB, 721302 India. e-mail: deshpande.anuj@ece.iitkgp.ernet.in; ritwik@ece.iitkgp.ernet.in}
}


\maketitle
\section*{Abstract}
The mutations of a complex systemic disease like cancer can be modeled as stuck-at faults in the Boolean system paradigm. For a class of multiple faults, the fault identification is exceptionally significant under the incomplete access of all the underlying proteins of the system. A comprehensive linear framework has been developed in this manuscript to identify the class of faults under a set of homeostatic input conditions. An algorithm is developed to design new reporters to improve the observability. The other aspect of this manuscript lies in controlling the manifestation of the mutations, which is the essential objective of systems medicine research. The primary goal is to synthesize a cocktail of drug molecules (combination therapy) from a set of existing targeted drugs. The controllability results are included in this paper to understand the problem formally. An improvement of controllability algorithm is discussed to design new target drugs if the available drugs fail to accommodate the underlying fault set. The results are presented for Boolean maps and Boolean control networks. Biological examples are given to highlight the relevant results. \\

\textbf{Keywords}: Boolean control networks, cancer, inhibitory drugs, semi-tensor product, controllability, observability.



\section{Introduction}
The emergence of `systems biology' opens up new frontiers in medicine and biology. One of the major challenges in modern medical science has been to provide a cure for complex systemic diseases like cancer, Alzheimer's, and Parkinson's, to name a few. Cancer is taken as the model system for the work in this manuscript. The futility of traditional therapeutic procedures in developing one wonder-drug to cure cancer has forced physicians to look into targeted combination therapy \cite{Saito2015,sherbet2015,Reardon2015}. For the effective efforts towards the design of optimal combination therapy, it is required to have a reliable modeling scheme of the cancerous system such that a `healthy' response, `proliferating' response, and `post-therapy' response can be compared quantitatively. With such modeling scheme, a drug optimization problems can be defined which bring all these responses quantitatively closer to minimize effects of a disease.

Kauffman \cite{Kauffman1969} stated the possibility of modeling gene interactions with the Boolean network. The quantized states (ON and OFF) of the gene were represented as 0 and 1 in the logical domain. Since then, the Boolean modeling style has become famous among researchers \cite{Shmulevich2002, Shmulevich2002a, layek2011a} because of its discrete nature. The discrete nature helps to keep the computational complexity tractable for biological problems. Therefore, Boolean map (BM) and Boolean control network (BCN) have been taken as standard Boolean systems for modeling gene-regulatory networks in this manuscript.

A detailed study of the biological aspect of cancer was given by \cite{weinberg2013}. Systemic diseases like cancer begin with random mutations in somatic cells \cite{weinberg2013,martincorena2015,shendure2015} causing faults in the model and alter its desired dynamics. The altered dynamics of the system often results in an undesirable output vector (disease phenotypes). Depending on the nature of mutation, these mutations are mapped as different stuck-at faults in the Boolean systems \cite{layek2011b,abramovici1990}. These faults can be categorized into three types. If the mutated gene gets deleted or becomes incapable of transcribing and translating into the protein, the fault is of `stuck-at 0' (sa-0) type. Sometimes the mutation is such that the folded protein becomes constitutively active in the relevant signaling pathway. Then, the fault is considered to be of `stuck-at 1' (sa-1) type. The `no-fault' scenario appears when the mutation does not alter the kinase binding residues of the protein and the wild-type behavior continues \cite{layek2011b}.
The available targeted drugs for cancer are mostly small inhibitory molecules capable of blocking cell signaling in transduction pathways \cite{layek2011b}. Hence, the drugs can be modeled as {\it de-novo} inputs to the Boolean system.  These drugs control the network behavior; however, the problem of finding suitable drugs is entirely dependent on the ability to estimate the possible mutations (faults). The detail description of the mentioned biological terms is given in \cite{weinberg2013}. 

With such fault and drug modeling in the Boolean domain, a method to obtain optimal drugs was chalked out by \cite{layek2011b}, which showed a promising preliminary result towards predictive combination therapy design for single fault scenario (SSF). However, the possibility of the number of mutations is arbitrary in a general biological system. Therefore, methods are required to detect the presence of multiple stuck-at faults (MSF).  Such fault detection problems are NP-complete \cite{Fujiwara1990,cook71}. An interesting approach to fault analysis in Boolean networks was given by \cite{Fornasini2015}. However, the assumption of `fault sequence' used by them was perhaps unnecessary for real biological systems because the timescale of variation in fault vector (via random mutation) is the order of magnitude higher than the timescale of any fault diagnosis or therapy of the faulty system. Also, a trajectory-based design is not completely reliable in biological systems, as the experimental samples are usually collected non-uniformly. There is a possibility of missing certain critical transient states in such experiments. {
Li and Wang \cite{Li2012c} developed a `test set' design approach for combinational networks or Boolean maps (BMs). Liu \textit{et al}. \cite{Liu2014} extended this method for multi-valued logical maps. However, both of these papers did not consider the sequential networks or Boolean control networks (BCNs) which are considerably more complicated than a BM, especially for the multiple fault detection \cite{abramovici1990}.  Apart from this, an important class of `no-fault' (absence of a fault) was not considered in \cite{Fornasini2015,Li2012c,Liu2014}.  All these aspects have been considered for the work in this manuscript.

Unlike Boolean models, it may not always be possible to fix the specific `test patterns' in the biological network for the detection of faults (mutations). For this reason, the test set based methods like \cite{Lin2011,Li2012c,layek2012fault} may not be used in a real biological scenario. For undetectable faults, an algorithm for improvement in observability is given in this manuscript. Some earlier aspects of observability problems were considered in  \cite{Xu2013b,Laschov2013,Cheng2016,zhang2016}. In this manuscript, the optimal drugs are estimated with this fault information. If available drugs are unable to provide a suitable cure, an improvement in controllability has been suggested which provide new target locations.

With the above background, the motivation behind the underlying work resides in attempting a unified model based approach towards the realistic fault identification and intervention problems for the two classes of Boolean systems, namely the BM and the BCN. The initial results of this work were published in \cite{self2017acc}. Note that, the terms of drug (vector) and intervention are used with the same meaning interchangeably in this paper.
\section{Preliminaries}
\subsection{Variables in logical domain} \label{sec:logicdomain}
Let a logical domain be defined as 
\[
\sv{L} = \{True = 1,  False = 0\}. 
\]
Let $u$ be the input variable and $y$ be the output variable. For the logical domain,$\{u, y\} \in \sv{L}$. $d$ is a logical variable for a drug such that $d = 1$ means the drug is applied and $d = 0$ means the drug is not applied, i.e., $d \in \sv{L}$. $f$ is a ternary variable for a fault with values \{sa-1, sa-0, no-fault\} which are represented as $f \in \{-1,0,1\}$.

For $\alpha$ number of primary inputs, the input state (Boolean vector) is defined as $U = (u_1,\hdots,u_\alpha)$, $U \in \{0,1\}^\alpha$. Similarly, for $\beta$ number of primary outputs, the output state is $Y \in \{0,1\}^\beta$; for $\lambda$ number of drugs, the drugs state is $D \in \{0,1\}^\lambda$; and for $\gamma$ number of faults, the fault state is $F \in \{-1,0,1\}^\gamma$. The notations $\alpha, \beta, \gamma,$ and $\lambda$ are used in this same context throughout the manuscript.
%

\subsection{Boolean control network and Boolean map} \label{BM}
\subsubsection{Boolean control network}
A Boolean control network (BCN) \cite{Kauffman1969,Kauffm1993,Glass1973,de2002} is a discrete-time, discrete-state, deterministic system. Due to the presence of a feedback loop, it can be perceived similar to a sequential network in Boolean domain.
\begin{defn}:
A BCN is a uniformly sampled discrete state dynamical system represented as:
\begin{eqnarray} 
\begin{aligned}
X(\tau) &= \Psi^\ast \big(U(\tau),X(\tau-1)\big) , X(0) = X_0, \\
Y(\tau) &= \Gamma^\ast \big(U(\tau),X(\tau)\big),
\end{aligned} \label{eq:BCN_def}
\end{eqnarray}
where $\Psi^\ast : \{0,1\}^\alpha \times \{0,1\}^N \rightarrow \{0,1\}^N$ and $\Gamma^\ast : \{0,1\}^\alpha \times \{0,1\}^N \rightarrow \{0,1\}^\beta$ are two Boolean operators governing the dynamics of the BCN. $N$ is the number of feedback variables,  $\tau \in {\mathcal{Z}^{+}}$ is a non-negative discrete time index, and the feedback state $X(\tau) \in \{0,1\}^N$. The input state $U(\tau) \in \{0,1\}^\alpha$, and the output state $Y(\tau) \in \{0,1\}^\beta$ are the Boolean vectors as defined in Section~\ref{sec:logicdomain}.
\end{defn}
\subsubsection{Boolean map} 
A Boolean map (BM) is obtained by removing the feedback path from BCN, i.e., it behaves like a combinational network in Boolean domain. 
\begin{defn}:
A BM is a non-feedback form of a BCN which relaxes a requirement for discrete uniform sampling, represented as:
\begin{eqnarray}
\begin{aligned}
Y(t) &= \Gamma^\ast \big(U(t)\big), \label{eq:BM_def}
\end{aligned}
\end{eqnarray}
where $\Gamma^\ast : \{0,1\}^\alpha \rightarrow \{0,1\}^\beta$ is the Boolean operator. $t\in {\mathcal{R}}^{+}$ is a non-negative continuous time index.
\end{defn}

Functions ($\Psi^\ast$, $\Gamma^\ast$) for BCN and a function $\Gamma^\ast$ for BM are derived from  biological information \cite{layek2011b}. Therefore, they are assumed to be known for the work in this paper.
\subsection{Problem formulation in Boolean systems} \label{sec:prob_def}
A faulty BCN can be modeled by considering the faults as separate inputs to the system provided a fault-free BCN and the probable gene mutations (fault points). The drugs are also treated as external input variables in the known BCN topology. Without loss of generality, it is assumed that a fault-free BCN shown in (\ref{eq:BCN_def}) can be converted to define a faulty BCN as:
\begin{equation} 
\begin{aligned}
X(\tau) &= \widehat{\Psi} \big(U(\tau), F, D,X(\tau-1) \big),  X(0) = X_0,\\  
Y(\tau) &= \widehat{\Gamma} (U(\tau),X(\tau)), \label{eq:main_eq}
\end{aligned}
\end{equation}
where $\widehat{\Psi} :  \{0,1\}^\alpha \times \{-1,0,1\}^\gamma \times \{0,1\}^\lambda \times \{0,1\}^N \rightarrow \{0,1\}^N$  and $\widehat{\Gamma} :  \{0,1\}^\alpha \times \{0,1\}^N \rightarrow \{0,1\}^\beta$. Here $(\widehat{\Psi}, \widehat{\Gamma})$ is a faulty version of a fault-free BCN $(\Psi^\ast, \Gamma^\ast)$. The faults ($F$) and drugs ($D$) are assumed to be constant over time, hence, the time stamp is not mentioned. 

Similarly, a fault-free BM shown in (\ref{eq:BM_def}) can be converted to define the faulty BM as:
\begin{equation} 
\begin{aligned}
Y &= \widehat{\Gamma} (U, F, D), \label{eq:main_eq2}
\end{aligned}
\end{equation}
where $\widehat{\Gamma} :  \{0,1\}^\alpha \times \{-1,0,1\}^\gamma \times \{0,1\}^\lambda \rightarrow \{0,1\}^\beta$. $\widehat{\Gamma}$ is a modified BM after introduction of faults.  Since the time indices of the input and output vectors of the BM are continuous in nature and devoid of causality, the indices can be omitted for all purposes. Hence (\ref{eq:main_eq2}) and all the subsequent analysis of BM no longer have a time stamp.

The drugs ($D$) are inhibitory and are inherently different from the inputs ($U$). The input $U$ is mostly not under the control of the therapist for {\it in situ} modeling of a gene regulatory system. The inputs can be growth factors, hormones, oxygen, or different kind of molecular stresses \cite{weinberg2013}, whereas, drugs are user designed molecules targeted at known locations in Boolean systems. Hence the problem of control design in biological systems is finding drugs $D$. 
In both cases shown in (\ref{eq:main_eq}) and (\ref{eq:main_eq2}), $\widehat{\Psi}$ and $\widehat{\Gamma}$ are assumed to be known because $\Psi^\ast$ and $\Gamma^\ast$ are known from their construction.

\subsection{Semi-tensor products}
For the conversion of logical expression into the linear form, the variables in the logical domain are required to map in the vector form. Let a vector (delta) set $\Delta_k$ be defined as $\Delta_k = \{\delta^i_k | \,i = 1,2,\hdots,k \}$, where $\delta^i_k$ is the $i$-th column of an identity matrix ${\rm{I}}_k$. A set $\Delta_2$ is used to denote the binary values, such that $\{1,0\} \sim \{\delta_2^1, \delta_2^2\}$ respectively. Therefore, the variables defined in the logical domain are defined in the vector form given as, $\{\tv{u, y ,d}\} \in \Delta_2$.
Similarly, a vector set $\Delta_3$ is required to show a ternary variable in the vector form. Therefore, a vector form of fault variable is defined as $\tv{f} \in \Delta_3$, where $\delta_3^1$ shows a sa-1 fault, $\delta_3^2$ shows a sa-0 fault, and $\delta_3^3$ shows a no-fault condition. Although $f$ is not a binary variable, such multi-valued mapping has been done earlier in the literature \cite{Liu2014}.

The logical operators $\widehat{\Psi}$ and $\widehat{\Gamma}$ are converted into linear operators using the semi-tensor product (STP) approach formulated in  \cite{Cheng2001,Cheng2007,Cheng2009a,Cheng2010b,Cheng2011Book}.
\begin{defn}: 
Let $\sv{M}$ be the set of all matrices. Consider a matrix $\mv{A} \in \sv{M}_{m \times n}$ and a matrix $\mv{M} \in \sv{B}_{p \times q}$. Let $c$ be the least common multiple of $n$ and $p$. Then the STP of $\mv{A}$ and $\mv{B}$ is defined as
\[
\mv{A} \ltimes \mv{B} = (\mv{A} \otimes \mv{I}_{c/n})(\mv{B} \otimes \mv{I}_{c/p}),
\]
where $\otimes$ is the Kronecker product of matrices.
\end{defn}

\begin{rem}:
Every matrix product has been assumed to be an STP throughout the paper. Therefore, the notation ``$\ltimes$" is mostly excluded.
\end{rem} 

For an STP of $p$ k-valued logical variables, a mapping $\ltimes_{i=1}^p : \Delta_k \rightarrow \Delta_{k^p}$. For $\alpha$ number of input variables, an input state $U$ is represented by a vector $\tv{U} = \tv{u}_1\cdots \tv{u}_\alpha = \ltimes^\alpha_{i=1} \mathbf{u}_i$. As $\tv{u}_i \in \Delta_2$, $\tv{U} \in \Delta_{2^\alpha}$. Similarly, the output vector $\tv{Y} \in \Delta_{2^\beta}$, the fault vector $\tv{F} \in \Delta_{3^\gamma}$, and the drug vector $\tv{D} \in \Delta_{2^\lambda}$. 

\subsection{Structure matrix}
A structure matrix is derived from the network structure, and thus represents the characteristics of the network.
\begin{defn}:
For a logical function $\sigma : \sv{L}^n \rightarrow \sv{L}^m$, a matrix $\mv{M}_\sigma \in \sv{B}_{2^m \times 2^n}$ is defined as a structure matrix if
\[
\sigma(x_1,\hdots,x_n) \equiv \mv{M}_\sigma \tv{x}_1 \cdots \tv{x}_n \quad \forall x_i \in \sv{L} \text{ and } \tv{x}_i \in \Delta_2
\]
\end{defn}
The dimensions of a structure matrix depend upon the dimensions of individual $\tv{x}_i$. In the matrix form, $\mv{M}_\sigma = [\delta_{2^m}^{i_1},$ $\hdots, \delta_{2^m}^{i_{2^n}}] = \delta_{2^m}[i_1, \hdots,{i_{2^n}}]$.
\section{Methodology}
\subsection{Structure matrix for faults and drugs}
As mentioned in Section \ref{sec:prob_def}, a fault is modeled as an external input. Therefore, each fault changes the structure matrix of the network. If a node $x$ is observed (by gene sequencing) to be mutated, its value is modified to $x^\ast$.  If the observed mutation is mapped as fault $f$, then the change in the value of $x$ is given by the relation in matrix form as $\mathbf{x}^\ast = {\rm{M}}_f\,\mathbf{x\,f}$, where $\{\mathbf{x}^\ast, \mathbf{x}\} \in \Delta_2, \mathbf{f} \in \Delta_3$. In this expression, $\rm{M}_f = \delta_2[1, 2, 1, 1, 2, 2]$ is the structure matrix showing the effect of fault for different values of node $x$. For instance, if the original value of a node $x$ is 0 ($\tv{x} = \delta_2^2$) and the fault observed is sa-1 ($\tv{f} = \delta_3^1$), then $\tv{x}^\ast = \mv{M}_f \delta_2^2 \delta_3^1 = \delta_2^1$. i.e., the value of $x$ changes to 1 as an effect of fault.

Similarly, the application of an inhibitory drug $d$ blocks the node $x$ by changing its value to 0. When the drug is not applied, the node $x$ retains its value. In logical form, it is represented as $x^\ast = x\cdot \bar{d}$ \cite{layek2011b}. Using the STP techniques \cite{Cheng2010b}, this expression can be written in vector form as $\mathbf{x}^\ast = {\rm{M}}_D\, \mathbf{x\,d}$, where $\{\mathbf{x}^\ast, \mathbf{x, d}\} \in \Delta_2$ and ${\rm{M}}_D = \delta_2[2,1,2,2]$. $\mv{M}_D$ shows a structure matrix incorporating the effect of a drug applied at node $x$. The derivation of structure matrices $\mv{M}_f$ and $\mv{M}_d$ are provided in Section~\ref{supsec:intro} of the \textit{Appendix}.

The matrices $\mv{M}_f$ and $\mv{M}_D$ are constant. For every fault $f$ at a node $x$, $\tv{x}$ has to be replaced by $\mv{M}_f \tv{x f}$ to obtain a linear form. Similarly, for every inhibitory drug $d$ at a node $x$, $\tv{x}$ has to be replaced by $\mv{M}_D \tv{x d}$. 
\subsection{A linear form representation of BM and BCN}
Let $\sv{B}$ be a set of matrices with all binary elements. 
\subsubsection{Boolean map} A linear form equation for BM is obtained by applying STP properties \cite{Cheng2010b} on (\ref{eq:main_eq2}) to get a linear form representation as:
\begin{align}
\tv{Y} &= \Gamma \tv{U F D}, \label{eq:CBN2}
\end{align}
where $\Gamma \in \sv{B}_{2^\beta	\times 2^{\alpha + \lambda} 3^\gamma}$ is a structure matrix of the BM.

\subsubsection{Boolean control networks} Similar to BM, STP properties \cite{Cheng2010b} are applied on (\ref{eq:main_eq}) to get a linear form equation as:
\begin{equation}
\begin{aligned}
\tv{X}(\tau) &= \Psi \tv{U}(\tau) \tv{F D} \tv{X}(\tau-1), \\
\tv{Y}(\tau) &= \Gamma \tv{U}(\tau) \tv{X}(\tau), \label{eq:SBN2}
\end{aligned}
\end{equation}
where $\Psi \in \sv{B}_{2^N	\times 2^{\alpha + \lambda + N} 3^\gamma}$ and $\Gamma \in \sv{B}_{2^\beta	\times 2^{\alpha + N}}$ are the structure matrices of the BCN.

It is important to modify the system representation for extension applications like improvement in observability and controllability in such a way that the extensions become computationally tractable. Since the possible locations of the mutations are known in the BCN,  a BCN model is divided into blocks.  Block diagram for a BCN is given in Fig.~\ref{fig:blockwiseB}. Such block-wise division is possible with all the BCN and BM.

\begin{figure}[!ht]
\centering
\includegraphics[scale=0.59]{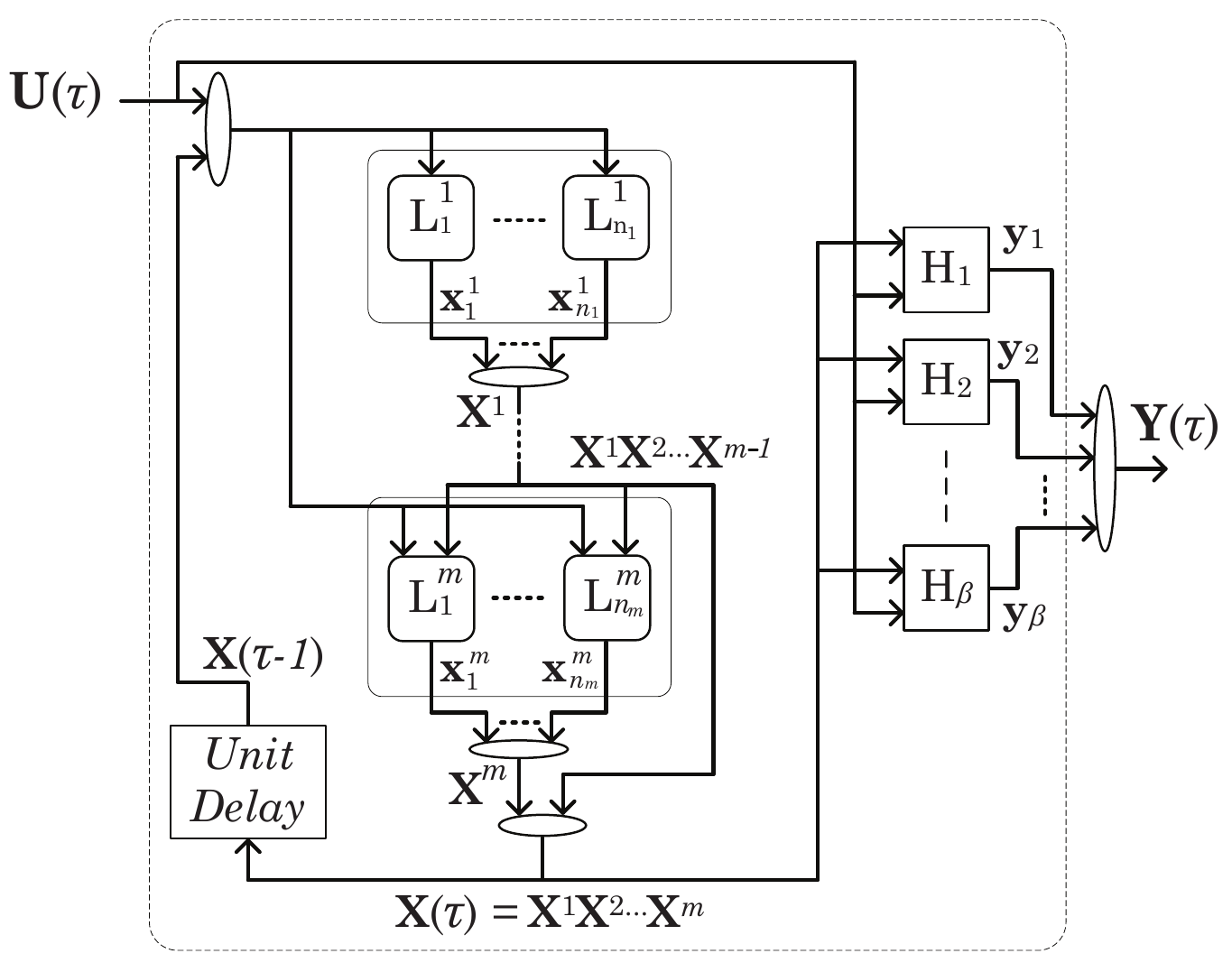} 
\caption{Blockwise design of a BCN} \label{fig:blockwiseB}
\end{figure}

The block diagram for the BCN consists of two main parts: primary block (indicated by $\rm{L}$ matrices) and the secondary block (indicated by $\rm{H}$ matrices). The primary block is further divided into sub-blocks, which are arranged level-wise based on the following rules: 

\textit{Level} 1 contains $n_1$ number of sub-blocks (with structure matrices ${\rm{L}}^1_i, i = 1,\hdots,n_1$) such that each sub-block depends on the input (including the state feedback), and only one fault or one drug. $x_i^1$ shows the output of block ${\rm{L}}_i^1$.
\textit{Level} 2 contains $n_2$ number of sub-blocks (with structure matrices ${\rm{L}}^2_i, i = 1,\hdots,n_2$) such that each sub-block depends on the input (including the state feedback), only one fault or one drug and at least one output from level 1 (i.e, at least one of $x_1^1,\hdots,x^1_{n_1}$). 
\textit{Level} $m$ contains $n_m$ number of sub-blocks (with structure matrices ${\rm{L}}^m_i, i = 1,\hdots,n_m$) such that each sub-block depends on the input (including the state feedback), only one fault or one drug and at least one output from level $(m-1)$ (i.e, at least one of $x_1^m,\hdots,x^m_{n_m}$). 

The secondary block is divided into the sub-blocks according to the following rules.
Each sub-block (${\rm{H}}_1,\hdots,{\rm{H}}_\beta$) depends upon the input and at least one output from level $m$ (i.e. $X^1,\hdots,X^m$). Each sub-block has a single output, which is one of the primary outputs, and no sub-block has any fault or drug.

The feedback input (${\rm{X}}(\tau-1)$) is taken from last level (${\rm{L}}^m$) of the sub-blocks and then it is applied to all sub-blocks of the primary block.  
The total number of sub-blocks in the network is given by: 
\begin{eqnarray}
\label{block_eqn}
{\mathcal{N}} = \sum^{m}_{i=1}{n_i} + \beta = \gamma + \lambda + \beta
\end{eqnarray}

This shows that the number of accessible nodes is limited. Since only one fault or one drug per sub-block is allowed in the block-wise separation, the generation of modified structure matrix for any choice of faults and drugs becomes straightforward. It also becomes possible to directly access the important internal nodes (either the node is a mutation site or the node is a target for an available drug). 
A structure matrix is derived for a BCN using this block-wise design.

For the BM, the construction is identical except the state feedback link. The structure matrix calculations for BM are provided in Section~\ref{supsec:BMder} the \textit{Appendix}. A similar process can be followed to obtain the structure matrices for BCN.  The final form of the linear equation of BM can be represented as:
\begin{equation}
\tv{Y} = \mv{H}\tv{UFD}, \label{eq:PO}
\end{equation}
where $\mv{H} \in \sv{B}_{2^\beta \times 2^{\alpha + \lambda} 3^\gamma}$ is a structure matrix of the BM. In comparison with (\ref{eq:CBN2}), $\Gamma = \mv{H}$ for the BM structure.

The final form of the linear equation of BCN can be expressed as:
\begin{equation}
\begin{aligned}
\tv{X}(\tau) &= \mv{L} \tv{U} \tv{F D X}(\tau-1) \\
\tv{Y}(\tau) &= \mv{H} \tv{U} \tv{F D X}(\tau), \label{eq:Struc2_FB}
\end{aligned}
\end{equation}
where $\mv{L} \in \sv{B}_{2^N	\times 2^{\alpha + \lambda + N} 3^\gamma}$ and $\mv{H} \in \sv{B}_{2^\beta	\times 2^{\alpha + N}}$ are the structure matrices of BCN. (\ref{eq:Struc2_FB}) shows that effect of faults and drugs on the next state and on the output vector. The input remains unchanged with the state transitions; therefore, the time index for the input vector is neglected.  In comparison with (\ref{eq:SBN2}), $\Psi = \mv{L}$ and $\Gamma = \mv{H W}_{[2^\lambda 3^\gamma, 2^\alpha]} \tv{FD}$ for the BCN structure. $\mv{W}_{[p,q]}$ is a swap matrix of dimensions $pq \times pq$ \cite{Cheng2010b}.
\section{Results}
The results for fault analysis and intervention for BM and BCN are derived from their structure matrices shown in (\ref{eq:PO}) and (\ref{eq:Struc2_FB}) respectively. The inputs of the biological network are assumed to be experimentally readable. Therefore, the set of input vectors is assumed to be known experimentally.  Before going to the theorems, let us introduce some variables. 
$\mathbf{F}_0$ represents a fault vector, where all the $\gamma$ faults are in `no-fault' state. Hence, $\mathbf{F}_0 = \ltimes^\gamma_{j=1} \delta_3^3 = \delta_{3^\gamma}^{3^\gamma}$. $\mathbf{D}_0$ represents a drug vector when no drug is applied. Hence, $\mathbf{D}_0 = \ltimes^\lambda_{j=1} \delta_2^2 = \delta_{2^\lambda}^{2^\lambda}$. Let ${\mathcal{F}}$ represent the set of all ($3^\gamma$) fault vectors. However in biology, the number of hazardous fault vectors is much less than $3^\gamma$. Thus, the set of hazardous faults is defined as $\widehat{\mathcal{F}} \subseteq {\mathcal{F}}$. Let ${\mathcal{U}}$ represent the set of all ($2^\alpha$) input vectors. In the input space, only some inputs are homeostatic inputs, which are achievable in real biological systems.  Thus, $\widehat{\mathcal{U}} \subseteq {\mathcal{U}}$ represents the set of such permissible inputs. 
Let ${\mathcal{D}}$ be the set of all ($2^\lambda$) drugs. 
%

The next section discusses the main results of fault estimation and intervention along with some corollaries and proposition derived. If the available set of inputs fails to identify a particular fault uniquely, then it requires designing new output (reporters) to improve the observability.  Similarly, if the drugs are not available for the estimated faults, the improvement in controllability is required. Algorithms 1 and 2 show such a possibility of improvement in observability and controllability of the network. The results are initially developed for the Boolean map and then extended for the Boolean control networks. 
\begin{rem}:
Proofs of the two main theorems are given. All the other proofs are omitted as those proofs can be given following the same arguments. 
\end{rem}

\subsection{Boolean map} \label{BM2}
The network information of BM is available in the structure matrix $\mv{H}$ represented by (\ref{eq:PO}). Therefore, the fault analysis and intervention results are derived from the structure matrix $\mv{H}$. 
\begin{thm}:  \label{theorem:1.1}
\textit{Existence theorem}\\
For a given input vector $\widetilde{\mathbf{U}} \in \widehat{\mathcal{U}}$ in a BM in (\ref{eq:PO}),  the existence of fault vector $\mathbf{F}_i$ is assured iff
\begin{equation*}
\widetilde{\rm{H}}\mathbf{\widetilde{U}F}_i \ne \widetilde{\rm{H}}\mathbf{\widetilde{U}F}_0,
\end{equation*} 
where $\widetilde{\rm{H}} = {\rm{HW}}_{[2^\lambda,2^\alpha 3^\gamma]}\mathbf{D}_0$ and $\mv{W}_{[p,q]}$ is a swap matrix of dimensions $pq \times pq$ \cite{Cheng2010b}.\\
\end{thm}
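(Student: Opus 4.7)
The plan is to express the diagnostic measurement as a reduced-structure equation in which the drug vector has been pinned to $\tv{D}_0$, and then argue both directions by a short contrapositive against the healthy baseline.

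First, I would set $\tv{D} = \tv{D}_0$ in the linear representation $\tv{Y} = \mv{H}\tv{UFD}$ of~(\ref{eq:PO}); this captures the diagnostic setting in which we read the output of the BM before any therapy has been administered. Using the defining property of the swap matrix, $\mv{W}_{[p,q]}\tv{a}\tv{b} = \tv{b}\tv{a}$ for $\tv{a}\in\Delta_p$, $\tv{b}\in\Delta_q$, the fixed factor $\tv{D}_0$ can be shuffled to the leftmost vector slot:
\begin{equation*}
\mv{H}\,\widetilde{\tv{U}}\tv{F}\tv{D}_0 \;=\; \mv{H}\,\mv{W}_{[2^\lambda,\,2^\alpha 3^\gamma]}\,\tv{D}_0\,\widetilde{\tv{U}}\tv{F} \;=\; \widetilde{\mv{H}}\,\widetilde{\tv{U}}\tv{F},
\end{equation*}
which absorbs $\tv{D}_0$ into the reduced structure matrix $\widetilde{\mv{H}}$ defined in the theorem. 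Diagnosis under the permissible input $\widetilde{\tv{U}}$ then reduces to comparing the two candidate outputs $\widetilde{\mv{H}}\,\widetilde{\tv{U}}\tv{F}_i$ and $\widetilde{\mv{H}}\,\widetilde{\tv{U}}\tv{F}_0$.

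For the sufficiency direction, if $\widetilde{\mv{H}}\,\widetilde{\tv{U}}\tv{F}_i \neq \widetilde{\mv{H}}\,\widetilde{\tv{U}}\tv{F}_0$, then any measured output produced by $\tv{F}_i$ is inconsistent with the healthy hypothesis $\tv{F}_0$, so the presence of the fault $\tv{F}_i$ is unambiguously certified. For necessity, I would argue contrapositively: if the two outputs coincided, the observer would be unable to separate $\tv{F}_i$ from $\tv{F}_0$ using $\widetilde{\tv{U}}$ alone, hence existence could not be assured; therefore assurance forces the inequality.

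The only real subtlety is semantic: one must read "existence is assured" as distinguishability of $\tv{F}_i$ from the healthy baseline $\tv{F}_0$ under the fixed homeostatic input $\widetilde{\tv{U}}$ with $\tv{D}=\tv{D}_0$, rather than as unique identification of $\tv{F}_i$ among all members of $\widehat{\mathcal{F}}$. Once that convention is fixed, the proof collapses to the single STP manipulation above together with a one-line contrapositive in each direction; the only place where care is genuinely needed is keeping the swap-matrix dimensions $2^\lambda$ and $2^\alpha 3^\gamma$ in the correct order so that $\tv{D}_0$ lands exactly where the matrix multiplication with $\mv{H}$ expects it.
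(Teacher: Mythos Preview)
Your proposal is correct and follows essentially the same route as the paper: both absorb $\tv{D}_0$ into a reduced structure matrix via the swap-matrix identity, then argue each implication by comparing the faulty output against the healthy baseline (indistinguishability when equal, certified existence when unequal). Your added remark clarifying that ``existence is assured'' means distinguishability from $\tv{F}_0$ rather than unique identification is a useful gloss the paper leaves implicit.
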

\begin{proof}:
{\it Necessary condition:} For the specified input vector $\ttv{U} \in \whsv{U}$, a null-drug vector $\mathbf{D}_0$, and an arbitrary fault vector $\mathbf{F}_i$ in a BM, output $\mathbf{Y}(\widetilde{\mathbf{U}},\mathbf{F}_i,\mathbf{D}_0)$ of the system is given by: 
\begin{eqnarray*}
\mathbf{Y}(\widetilde{\mathbf{U}},\mathbf{F}_i,\mathbf{D}_0) = {\rm{H}}\widetilde{\mathbf{U}}\mathbf{F}_i{\mathbf{D}_0}
			  = {\rm{HW}}_{[2^\lambda,2^\alpha 3^\gamma]}\mathbf{D}_0 \widetilde{\mathbf{U}}\mathbf{F}_i 
              = \widetilde{\rm{H}}\widetilde{\mathbf{U}}\mathbf{F}_i,  
\end{eqnarray*}
where $\widetilde{\rm{H}} = {\rm{H W}}_{[2^\lambda,2^\alpha 3^\gamma]} \mathbf{D}_0$. $\tmv{H} \in \sv{B}_{2^\beta \times 2^\alpha 3^\gamma}$.  If a system with structure matrix $\rm{H}$ is fault-free or under the influence of null-fault vector $\mathbf{F}_0$, the output is given by: 
\begin{eqnarray*}
\mathbf{Y}(\widetilde{\mathbf{U}},\mathbf{F}_0,\mathbf{D}_0) = {\rm{H}}\widetilde{\mathbf{U}}\mathbf{F}_0{\mathbf{D}_0}
			  = {\rm{HW}}_{[2^\lambda,2^\alpha 3^\gamma]}\mathbf{D}_0 \widetilde{\mathbf{U}}\mathbf{F}_0 
              = \widetilde{\rm{H}}\widetilde{\mathbf{U}}\mathbf{F}_0, 
\end{eqnarray*}
where all input conditions are kept identical. 
If $\mathbf{Y}(\widetilde{\mathbf{U}},\mathbf{F}_i,\mathbf{D}_0)$ = $\mathbf{Y}(\widetilde{\mathbf{U}},\mathbf{F}_0,\mathbf{D}_0)$, the fault vectors $\mathbf{F}_i$ and $\mathbf{F}_0$ become indistinguishable. Hence, the necessary condition of the existence of $\mathbf{F}_i$ is $\mathbf{Y}(\widetilde{\mathbf{U}},\mathbf{F}_i,\mathbf{D}_0) \neq \mathbf{Y}(\widetilde{\mathbf{U}},\mathbf{F}_0,\mathbf{D}_0)$. 
Therefore, 
\begin{align*}
\widetilde{\rm{H}}\widetilde{\mathbf{U}}\mathbf{F}_i &\ne \widetilde{\rm{H}}\widetilde{\mathbf{U}}\mathbf{F}_0.
\end{align*}
\\{\it Sufficient condition:}
Suppose, 
\begin{align*}
\widetilde{\rm{H}}\widetilde{\mathbf{U}}\mathbf{F}_i &\ne \widetilde{\rm{H}}\widetilde{\mathbf{U}}\mathbf{F}_0,
\end{align*}
where  $\widetilde{\rm{H}} = {\rm{H W}}_{[2^\lambda,2^\alpha 3^\gamma]} \mathbf{D}_0$. $\widetilde{\rm{H}} $ is the modified structure matrix of the BM under null-drug vector $\mathbf{D}_0$. The above inequality trivially establishes that $\mathbf{F}_i \neq \mathbf{F}_0$. Hence it is sufficient to say that the inequality guaranties existence of fault vector $\mathbf{F}_i$. This completes the proof. 
\end{proof}
\begin{cor}:  \label{cor:thm1.0}
For a given fault vector $\widetilde{\mathbf{F}} \in \widehat{\mathcal{F}}$ in a BM in (\ref{eq:PO}),  there exists an input vector $\mathbf{U}_j \in \widehat{\mathcal{U}}$ capable of detecting a fault vector $\widetilde{\mathbf{F}}$ iff 
\begin{equation*}
\widetilde{\rm{H}}^{'}\widetilde{\mathbf{F}}\mathbf{U}_j \neq \widetilde{\rm{H}}^{'}{\mathbf{F}_0}\mathbf{U}_j,
\end{equation*}
where $\widetilde{\rm{H}}^{'} = {\rm{HW}}_{[3^\gamma 2^\lambda,2^\alpha]}{\rm{W}}_{[2^\lambda,3^\gamma]}\mathbf{D}_0$.
\end{cor}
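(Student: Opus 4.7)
The plan is to mirror the two-direction argument of Theorem~\ref{theorem:1.1}, but with the roles of $\mathbf{U}$ and $\mathbf{F}$ exchanged: the fault is now fixed at $\widetilde{\mathbf{F}}$ and we ask whether a particular input $\mathbf{U}_j \in \widehat{\mathcal{U}}$ suffices to tell $\widetilde{\mathbf{F}}$ apart from $\mathbf{F}_0$. Detectability means exactly that the measured output under $(\mathbf{U}_j,\widetilde{\mathbf{F}},\mathbf{D}_0)$ differs from the fault-free output under $(\mathbf{U}_j,\mathbf{F}_0,\mathbf{D}_0)$; both directions of the iff follow from this observation once the output is written in the form $\widetilde{\mathrm{H}}'\mathbf{F}\mathbf{U}_j$.

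First I would start from equation~(\ref{eq:PO}), plug in $\mathbf{D}=\mathbf{D}_0$, and then rearrange the three arguments $\mathbf{U}$, $\mathbf{F}$, $\mathbf{D}_0$ using swap matrices so that the fault vector appears leftmost and the input rightmost, allowing the constant $\mathbf{D}_0$ to be absorbed into the leading coefficient matrix. Concretely: one swap pulls $\mathbf{D}$ past $\mathbf{F}$ (accounting for the factor $\mathrm{W}_{[2^\lambda,3^\gamma]}$), a second swap pulls the $(3^\gamma\,2^\lambda)$-dimensional block $\mathbf{F}\mathbf{D}$ past $\mathbf{U}$ (accounting for the factor $\mathrm{W}_{[3^\gamma 2^\lambda,2^\alpha]}$), and setting $\mathbf{D}=\mathbf{D}_0$ collapses the constant to yield
\[
\mathbf{Y}(\mathbf{U}_j,\mathbf{F},\mathbf{D}_0) \;=\; \mathrm{H}\,\mathrm{W}_{[3^\gamma 2^\lambda,\,2^\alpha]}\,\mathrm{W}_{[2^\lambda,3^\gamma]}\,\mathbf{D}_0\,\mathbf{F}\,\mathbf{U}_j \;=\; \widetilde{\mathrm{H}}'\,\mathbf{F}\,\mathbf{U}_j,
\]
which is exactly the definition of $\widetilde{\mathrm{H}}'$ stated in the corollary. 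This computation is the analogue of the $\widetilde{\mathrm{H}} = \mathrm{H}\,\mathrm{W}_{[2^\lambda,2^\alpha 3^\gamma]}\mathbf{D}_0$ step in Theorem~\ref{theorem:1.1}, and it works because swap matrices commute the dimensional factors in an STP expression without altering the represented logical mapping.

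Having established the form $\mathbf{Y}(\mathbf{U}_j,\mathbf{F},\mathbf{D}_0)=\widetilde{\mathrm{H}}'\mathbf{F}\mathbf{U}_j$, the necessary and sufficient parts are then essentially one line each. For necessity: if $\mathbf{U}_j$ detects $\widetilde{\mathbf{F}}$ then by definition the output under $(\mathbf{U}_j,\widetilde{\mathbf{F}},\mathbf{D}_0)$ cannot coincide with the output under $(\mathbf{U}_j,\mathbf{F}_0,\mathbf{D}_0)$, so $\widetilde{\mathrm{H}}'\widetilde{\mathbf{F}}\mathbf{U}_j \neq \widetilde{\mathrm{H}}'\mathbf{F}_0\mathbf{U}_j$. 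For sufficiency: the same inequality guarantees that the two outputs differ under the common input $\mathbf{U}_j$, so $\mathbf{U}_j$ witnesses the presence of $\widetilde{\mathbf{F}}$ as distinguished from the no-fault state.

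The only real obstacle is bookkeeping of the swap-matrix dimensions; once I pick a convention consistent with the definition $\mathrm{W}_{[m,n]}\mathbf{x}\mathbf{y}=\mathbf{y}\mathbf{x}$ for $\mathbf{x}\in\Delta_m,\mathbf{y}\in\Delta_n$, the product $\mathrm{W}_{[3^\gamma 2^\lambda,2^\alpha]}\mathrm{W}_{[2^\lambda,3^\gamma]}$ falls out of performing the two swaps in the order (i) $\mathbf{F}\leftrightarrow\mathbf{D}$ and (ii) $\mathbf{U}\leftrightarrow(\mathbf{F}\mathbf{D})$. Everything else is the same necessity/sufficiency scaffold already used in the proof of Theorem~\ref{theorem:1.1}, which is why the authors presumably omit the details.
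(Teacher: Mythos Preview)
Your proposal is correct and matches the paper's intended approach: the paper explicitly omits the proof of this corollary, stating that it ``can be given following the same arguments'' as Theorem~\ref{theorem:1.1}, and your argument does precisely that---rewrite $\mathbf{Y}(\mathbf{U}_j,\mathbf{F},\mathbf{D}_0)=\mathrm{H}\mathbf{U}_j\mathbf{F}\mathbf{D}_0$ as $\widetilde{\mathrm{H}}'\mathbf{F}\mathbf{U}_j$ via the two swap matrices, then run the same necessity/sufficiency scaffold with the roles of $\mathbf{U}$ and $\mathbf{F}$ exchanged. The swap-matrix bookkeeping you outline (first $\mathrm{W}_{[2^\lambda,3^\gamma]}$ to obtain $\mathbf{F}\mathbf{D}_0$ from $\mathbf{D}_0\mathbf{F}$, then $\mathrm{W}_{[3^\gamma 2^\lambda,2^\alpha]}$ to pull $\mathbf{F}\mathbf{D}_0$ past $\mathbf{U}$) is exactly what produces the stated $\widetilde{\mathrm{H}}'$.
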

\begin{prop}:  \label{cor:thm1.1}
For a specified input vector $\widetilde{\mathbf{U}} \in \widehat{\mathcal{U}}$ in BM in (\ref{eq:PO}), a set of fault vectors ${\mathcal{F}_{\widetilde{\mathbf{U}}}}  \subseteq \widehat{\mathcal{F}}$ which is detectable by input vector $\widetilde{\mathbf{U}}$ is given by  
\begin{align*}
{\mathcal{F}_{\widetilde{\mathbf{U}}}} = \{\mathbf{F}_i\,\mid\, \widetilde{\rm{H}}\widetilde{\mathbf{U}} \mathbf{F}_i \ne \widetilde{\rm{H}}\widetilde{\mathbf{U}} {\mathbf{F}_0} \, ;  \mathbf{F}_i \in \widehat{\mathcal{F}}  \},
\end{align*}
where $\widetilde{\rm{H}} = {\rm{H W}}_{[2^\lambda,2^\alpha 3^\gamma]} \mathbf{D}_0$. For \textit{existence}, ${\mathcal{F}_{\widetilde{\mathbf{U}}}} \ne \emptyset$.
\end{prop}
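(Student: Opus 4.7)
The plan is to derive this proposition directly from Theorem~\ref{theorem:1.1} by fixing the input vector $\widetilde{\mathbf{U}}$ and sweeping over all candidate hazardous fault vectors $\mathbf{F}_i \in \widehat{\mathcal{F}}$. Since Theorem~\ref{theorem:1.1} characterizes exactly when a single fault $\mathbf{F}_i$ is detectable at a given $\widetilde{\mathbf{U}}$ through the inequality $\widetilde{\rm{H}}\widetilde{\mathbf{U}}\mathbf{F}_i \ne \widetilde{\rm{H}}\widetilde{\mathbf{U}}\mathbf{F}_0$, the subset of $\widehat{\mathcal{F}}$ detectable by $\widetilde{\mathbf{U}}$ must be precisely the collection of those $\mathbf{F}_i$ for which this inequality holds. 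The only real content to verify is the mutual inclusion of the two set descriptions.

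First I would establish the $\subseteq$ direction: pick an arbitrary $\mathbf{F}_i \in \mathcal{F}_{\widetilde{\mathbf{U}}}$; by definition this is a hazardous fault that can be detected under $\widetilde{\mathbf{U}}$ applied together with the null-drug vector $\mathbf{D}_0$. Detectability means $\mathbf{Y}(\widetilde{\mathbf{U}},\mathbf{F}_i,\mathbf{D}_0) \ne \mathbf{Y}(\widetilde{\mathbf{U}},\mathbf{F}_0,\mathbf{D}_0)$, and the semi-tensor-product manipulation carried out in the proof of Theorem~\ref{theorem:1.1} rewrites the two sides as $\widetilde{\rm{H}}\widetilde{\mathbf{U}}\mathbf{F}_i$ and $\widetilde{\rm{H}}\widetilde{\mathbf{U}}\mathbf{F}_0$ respectively, so the inequality in the statement follows. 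Conversely, for the $\supseteq$ direction, any $\mathbf{F}_i \in \widehat{\mathcal{F}}$ satisfying $\widetilde{\rm{H}}\widetilde{\mathbf{U}}\mathbf{F}_i \ne \widetilde{\rm{H}}\widetilde{\mathbf{U}}\mathbf{F}_0$ is detectable by the sufficiency direction of Theorem~\ref{theorem:1.1}, hence belongs to $\mathcal{F}_{\widetilde{\mathbf{U}}}$.

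The existence clause $\mathcal{F}_{\widetilde{\mathbf{U}}} \ne \emptyset$ is then just the statement that the given input $\widetilde{\mathbf{U}}$ is useful for fault analysis precisely when there is at least one hazardous fault it separates from the null-fault baseline; equivalently, the column of $\widetilde{\rm{H}}\widetilde{\mathbf{U}}$ indexed by some $\mathbf{F}_i \in \widehat{\mathcal{F}}$ must differ from the column indexed by $\mathbf{F}_0$. I do not anticipate a genuine obstacle here, since the proposition is a set-valued restatement of Theorem~\ref{theorem:1.1}; the only care required is to note that the swap-matrix identity $\mathbf{H}\widetilde{\mathbf{U}}\mathbf{F}_i\mathbf{D}_0 = \widetilde{\rm{H}}\widetilde{\mathbf{U}}\mathbf{F}_i$ holds uniformly in $\mathbf{F}_i$, so that the same matrix $\widetilde{\rm{H}}$ governs the entire sweep over $\widehat{\mathcal{F}}$ and no per-fault re-derivation of the structure matrix is needed.
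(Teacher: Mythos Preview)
Your proposal is correct and matches the paper's intended approach: the paper explicitly omits the proof of this proposition, noting in a remark that all results other than the two main theorems ``can be given following the same arguments,'' and your derivation from Theorem~\ref{theorem:1.1} by sweeping $\mathbf{F}_i$ over $\widehat{\mathcal{F}}$ is precisely that argument.
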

The following results discuss the necessary and sufficient conditions for detecting the existence of a particular fault using an arbitrary input.

\begin{prop}:  \label{cor:thm1.2}
For a specified fault vector $\widetilde{\mathbf{F}} \in \widehat{\mathcal{F}}$ in a BM in (\ref{eq:PO}), a set of input vectors ${\mathcal{U}_{\widetilde{\mathbf{F}}}} \subseteq \widehat{\mathcal{U}} $  which can detect the fault vector is given by:
\begin{equation*}
{\mathcal{U}_{\widetilde{\mathbf{F}}}} = \{\mathbf{U}_i\,\mid\,\widetilde{\rm{H}}^{'}\widetilde{\mathbf{F}}\mathbf{U}_i \neq \widetilde{\rm{H}}^{'}\mathbf{F}_0 \mathbf{U}_i
\, ;  \mathbf{U}_i \in \widehat{\mathcal{U}}  \},
\end{equation*}
where $\widetilde{\rm{H}}^{'} ={\rm{HW}}_{[3^\gamma 2^\lambda,2^\alpha]}{\rm{W}}_{[2^\lambda,3^\gamma]}\mathbf{D}_0$.
\end{prop}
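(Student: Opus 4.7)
The plan is to mirror the proof of Theorem \ref{theorem:1.1}, except that here the fault vector is fixed and the input vector varies, so the roles of $\tv{U}$ and $\tv{F}$ must be exchanged inside the semi-tensor product. The content of the proposition is effectively a collection statement: each individual membership criterion is a one-input, one-fault detectability check, so once the right ``modified'' structure matrix $\tmv{H}'$ is isolated, the set description follows by simply ranging over $\mathbf{U}_i \in \whsv{U}$.

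First I would write the BM output under two conditions with the candidate input $\mathbf{U}_i$: the fault-free response $\mathbf{Y}(\mathbf{U}_i,\mathbf{F}_0,\mathbf{D}_0) = \mv{H}\mathbf{U}_i \mathbf{F}_0 \mathbf{D}_0$ and the faulty response $\mathbf{Y}(\mathbf{U}_i,\ttv{F},\mathbf{D}_0) = \mv{H}\mathbf{U}_i \ttv{F} \mathbf{D}_0$, both taken from equation (\ref{eq:PO}) with the drug vector fixed at $\mathbf{D}_0$. Exactly as in the existence theorem, detectability of $\ttv{F}$ at input $\mathbf{U}_i$ is equivalent to the inequality $\mathbf{Y}(\mathbf{U}_i,\ttv{F},\mathbf{D}_0) \neq \mathbf{Y}(\mathbf{U}_i,\mathbf{F}_0,\mathbf{D}_0)$, since otherwise the faulty BM is experimentally indistinguishable from the nominal BM at that input.

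Next I would reorder the semi-tensor factors so that $\mathbf{U}_i$ appears on the right, matching the form in the proposition. Applying $\mv{W}_{[2^\lambda,3^\gamma]}$ to the product $\mathbf{D}_0 \ttv{F}$ swaps them to $\ttv{F} \mathbf{D}_0$, and then applying $\mv{W}_{[3^\gamma 2^\lambda,2^\alpha]}$ to $(\ttv{F}\mathbf{D}_0)\mathbf{U}_i$ swaps the combined $(3^\gamma 2^\lambda)$-dimensional block with $\mathbf{U}_i$. Carrying out both swaps shows
\begin{equation*}
\mv{H}\mathbf{U}_i \ttv{F} \mathbf{D}_0 \;=\; \mv{H}\,\mv{W}_{[3^\gamma 2^\lambda,2^\alpha]}\mv{W}_{[2^\lambda,3^\gamma]}\mathbf{D}_0\,\ttv{F}\,\mathbf{U}_i \;=\; \tmv{H}'\ttv{F}\mathbf{U}_i,
\end{equation*}
and the same manipulation with $\mathbf{F}_0$ in place of $\ttv{F}$ yields $\mv{H}\mathbf{U}_i\mathbf{F}_0\mathbf{D}_0 = \tmv{H}'\mathbf{F}_0\mathbf{U}_i$. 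The detectability condition therefore becomes $\tmv{H}'\ttv{F}\mathbf{U}_i \neq \tmv{H}'\mathbf{F}_0\mathbf{U}_i$, which is exactly the membership condition defining $\ssv{U}{\ttv{F}}$. Taking the union of all permissible inputs $\mathbf{U}_i \in \whsv{U}$ satisfying this inequality gives the claimed set, and the containment $\ssv{U}{\ttv{F}} \subseteq \whsv{U}$ is automatic from the construction.

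The only nontrivial step is the swap-matrix bookkeeping in the middle paragraph: one has to be careful that the orders of the dimensions inside the $\mv{W}_{[\,\cdot\,,\,\cdot\,]}$ subscripts line up with the current factor ordering at each intermediate step, because a sign error there would attach $\mathbf{D}_0$ to the wrong block and invalidate the identification of $\tmv{H}'$. I expect that once that reordering is checked against the STP associativity/swap rules of \cite{Cheng2010b}, the rest of the argument is a verbatim re-use of the necessity--sufficiency logic from Theorem \ref{theorem:1.1} and Corollary \ref{cor:thm1.0}, which is presumably why the paper states the result without proof.
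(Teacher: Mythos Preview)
Your proposal is correct and follows exactly the approach the paper intends: the paper explicitly omits this proof, noting that it ``can be given following the same arguments'' as Theorem~\ref{theorem:1.1}, and your swap-matrix reordering together with the necessity--sufficiency detectability logic is precisely that argument with $\tv{U}$ and $\tv{F}$ interchanged. The bookkeeping you flag as the only delicate point is handled correctly, since $\mv{W}_{[2^\lambda,3^\gamma]}\mathbf{D}_0\ttv{F}=\ttv{F}\mathbf{D}_0$ and $\mv{W}_{[3^\gamma 2^\lambda,2^\alpha]}(\ttv{F}\mathbf{D}_0)\mathbf{U}_i=\mathbf{U}_i\ttv{F}\mathbf{D}_0$ recover $\mv{H}\mathbf{U}_i\ttv{F}\mathbf{D}_0$ as required.
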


\begin{thm}:  \label{theorem:2}
\textit{Uniqueness theorem} \\
For a specific fault $\widetilde{\mathbf{F}} \in \widehat{\mathcal{F}}$ in a BM in (\ref{eq:PO}), input $\widetilde{\mathbf{U}} \in \widehat{\mathcal{U}}$ can uniquely identify fault vector $\widetilde{\mathbf{F}}$ iff 
\begin{align*}
\widetilde{\rm{H}} \widetilde{\mathbf{U}} \widetilde{\mathbf{F}} &\ne \widetilde{\rm{H}} \widetilde{\mathbf{U}} \mathbf{F}_0, \\
\text{ and } \widetilde{\rm{H}} \widetilde{\mathbf{U}} \mathbf{F}_j &= \widetilde{\rm{H}} \widetilde{\mathbf{U}} \mathbf{F}_0; \quad\forall \mathbf{F}_j \ne \widetilde{\mathbf{F}}, \mathbf{F}_j\in \widehat{\mathcal{F}}.
\end{align*}
\end{thm}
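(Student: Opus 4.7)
The plan is to reduce this biconditional to two short applications of Theorem \ref{theorem:1.1}, after first pinning down the operational meaning of ``$\widetilde{\mathbf{U}}$ uniquely identifies $\widetilde{\mathbf{F}}$.'' Under the natural reading consistent with the existence theorem---namely, that the single test input $\widetilde{\mathbf{U}}$ acts as a binary discriminator which returns a response distinguishable from the fault-free response precisely when the underlying fault equals $\widetilde{\mathbf{F}}$---the two displayed conditions simply assert (i) that $\widetilde{\mathbf{F}}$ triggers a deviation from the baseline, and (ii) that no other hazardous fault does. I would split the argument symmetrically into sufficiency and necessity.

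For sufficiency, I would assume both conditions and show that, for every admissible underlying fault $\mathbf{F} \in \widehat{\mathcal{F}} \cup \{\mathbf{F}_0\}$, the observed output $\widetilde{\mathrm{H}}\widetilde{\mathbf{U}}\mathbf{F}$ differs from the baseline $\widetilde{\mathrm{H}}\widetilde{\mathbf{U}}\mathbf{F}_0$ if and only if $\mathbf{F} = \widetilde{\mathbf{F}}$. Indeed, if $\mathbf{F} = \widetilde{\mathbf{F}}$ condition 1 gives strict inequality; if $\mathbf{F} = \mathbf{F}_0$ equality is trivial; and if $\mathbf{F} = \mathbf{F}_j \ne \widetilde{\mathbf{F}}$ with $\mathbf{F}_j \in \widehat{\mathcal{F}}$, condition 2 collapses the output to the baseline. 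Hence observing any deviation from $\widetilde{\mathrm{H}}\widetilde{\mathbf{U}}\mathbf{F}_0$ forces $\mathbf{F} = \widetilde{\mathbf{F}}$, which is precisely unique identification.

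For necessity, assume that $\widetilde{\mathbf{U}}$ uniquely identifies $\widetilde{\mathbf{F}}$. Then in particular $\widetilde{\mathbf{F}}$ must be detectable against the baseline (otherwise its very presence is invisible under $\widetilde{\mathbf{U}}$), and Theorem \ref{theorem:1.1} supplies condition 1. If condition 2 failed for some $\mathbf{F}_j \ne \widetilde{\mathbf{F}}$ with $\mathbf{F}_j \in \widehat{\mathcal{F}}$, then $\widetilde{\mathrm{H}}\widetilde{\mathbf{U}}\mathbf{F}_j \ne \widetilde{\mathrm{H}}\widetilde{\mathbf{U}}\mathbf{F}_0$ and Theorem \ref{theorem:1.1} would declare $\mathbf{F}_j$ equally detectable by $\widetilde{\mathbf{U}}$; the ``different-from-baseline'' verdict could then originate from either $\widetilde{\mathbf{F}}$ or $\mathbf{F}_j$, contradicting uniqueness. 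Hence condition 2 must also hold.

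The only genuine obstacle is fixing the right operational definition: the stated conditions are strictly stronger than the plausible alternative $\widetilde{\mathrm{H}}\widetilde{\mathbf{U}}\widetilde{\mathbf{F}} \ne \widetilde{\mathrm{H}}\widetilde{\mathbf{U}}\mathbf{F}_j$ for all $\mathbf{F}_j \ne \widetilde{\mathbf{F}}$, and they correspond to treating $\widetilde{\mathbf{U}}$ as a single baseline-vs-everything-else test rather than a full discriminator among all faults. Once this interpretation is adopted---which is the one built into Theorem \ref{theorem:1.1} through its exclusive comparison against $\mathbf{F}_0$---both directions reduce to essentially one-line invocations of the existence theorem and no new computation beyond what was already carried out there is needed.
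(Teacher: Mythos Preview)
Your proposal is correct and matches the paper's intent. The paper in fact omits the proof of this theorem, stating in the Remark just before Section~\ref{BM2} that ``all the other proofs are omitted as those proofs can be given following the same arguments'' as the existence theorem; your reduction of both directions to invocations of Theorem~\ref{theorem:1.1} is precisely that. Your closing paragraph---flagging that the stated conditions encode a baseline-versus-deviation test rather than pairwise discrimination among all fault vectors---is a worthwhile clarification that the paper leaves entirely implicit, and it correctly explains why the second condition demands equality with $\widetilde{\rm{H}}\widetilde{\mathbf{U}}\mathbf{F}_0$ rather than merely inequality with $\widetilde{\rm{H}}\widetilde{\mathbf{U}}\widetilde{\mathbf{F}}$.
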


\begin{prop}:  \label{cor:thm2}
For a specified fault vector $\widetilde{\mathbf{F}} \in \widehat{\mathcal{F}}$ in a BM in (\ref{eq:PO}), a set of input vectors ${\mathcal{U}_{\widetilde{\mathbf{F}}}} \subseteq \widehat{\mathcal{U}}$ which can uniquely identify a fault vector $\widetilde{\mathbf{F}}$ is given by:  
\begin{align*}
{\mathcal{U}_{\widetilde{\mathbf{F}}}} =\{\mathbf{U}_j|\mathbf{U}_j\in \widehat{\mathcal{U}}, \widetilde{\rm{H}} \mathbf{U}_j \widetilde{\mathbf{F}} &\ne \widetilde{\rm{H}} \mathbf{U}_j \mathbf{F}_0,\\
\widetilde{\rm{H}} \mathbf{U}_j \mathbf{F}_k &= \widetilde{\rm{H}} \mathbf{U}_j \mathbf{F}_0\quad\forall \mathbf{F}_k \ne \widetilde{\mathbf{F}}, \mathbf{F}_k \in \widehat{\mathcal{F}} \}.
\end{align*}
\end{prop}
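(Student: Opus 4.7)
The plan is to recognize that this proposition is essentially a set-theoretic repackaging of Theorem~\ref{theorem:2} (the Uniqueness theorem): it simply collects, for a fixed $\widetilde{\mathbf{F}}$, all inputs $\mathbf{U}_j \in \widehat{\mathcal{U}}$ that satisfy the two algebraic conditions which Theorem~\ref{theorem:2} has already established to be necessary and sufficient for uniquely identifying $\widetilde{\mathbf{F}}$. So no new machinery should be needed beyond Theorem~\ref{theorem:2} and the definition of $\mathcal{U}_{\widetilde{\mathbf{F}}}$ (the set of inputs that uniquely identify $\widetilde{\mathbf{F}}$) and of $\widetilde{\rm{H}} = \mv{H}\mv{W}_{[2^\lambda,2^\alpha 3^\gamma]}\mathbf{D}_0$.

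First I would fix $\widetilde{\mathbf{F}} \in \widehat{\mathcal{F}}$ and abbreviate the right-hand-side collection as
\begin{equation*}
\mathcal{S} \;=\; \bigl\{\mathbf{U}_j \in \widehat{\mathcal{U}} \;\big|\; \widetilde{\rm{H}}\mathbf{U}_j\widetilde{\mathbf{F}} \ne \widetilde{\rm{H}}\mathbf{U}_j\mathbf{F}_0 \text{ and } \widetilde{\rm{H}}\mathbf{U}_j\mathbf{F}_k = \widetilde{\rm{H}}\mathbf{U}_j\mathbf{F}_0 \;\; \forall\, \mathbf{F}_k \ne \widetilde{\mathbf{F}},\, \mathbf{F}_k \in \widehat{\mathcal{F}}\bigr\}.
\end{equation*}
I would then establish the equality $\mathcal{U}_{\widetilde{\mathbf{F}}} = \mathcal{S}$ by two inclusions. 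For $(\supseteq)$, take $\mathbf{U}_j \in \mathcal{S}$; by construction it satisfies both conditions of Theorem~\ref{theorem:2} with $\widetilde{\mathbf{U}} := \mathbf{U}_j$, so the sufficiency direction of that theorem guarantees that $\mathbf{U}_j$ uniquely identifies $\widetilde{\mathbf{F}}$, i.e., $\mathbf{U}_j \in \mathcal{U}_{\widetilde{\mathbf{F}}}$. For $(\subseteq)$, take $\mathbf{U}_j \in \mathcal{U}_{\widetilde{\mathbf{F}}}$; by the necessity direction of Theorem~\ref{theorem:2}, both algebraic conditions must hold for this $\mathbf{U}_j$, so $\mathbf{U}_j \in \mathcal{S}$.

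There is no substantive obstacle: the result is a cataloguing lemma whose content is already absorbed in Theorem~\ref{theorem:2}. The only points that require a small amount of care are (i) that the STP factor order in the proposition ($\mathbf{U}_j$ before $\widetilde{\mathbf{F}}$) matches the order used in Theorem~\ref{theorem:2}, so no swap matrix needs to be inserted, and (ii) that the sweep over $\mathbf{F}_k$ in the second condition is restricted to $\widehat{\mathcal{F}}$, which is consistent with the biological assumption (stated before Theorem~\ref{theorem:1.1}) that only hazardous faults must be distinguished from $\widetilde{\mathbf{F}}$. Both are built into the definition of $\mathcal{S}$, so the argument reduces to the iff statement of Theorem~\ref{theorem:2} applied pointwise over $\widehat{\mathcal{U}}$.
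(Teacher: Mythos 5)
Your proposal is correct and matches the paper's intent: the paper explicitly omits this proof, noting that the non-theorem results follow by the same arguments as the main theorems, and your reduction of the proposition to a pointwise application of Theorem~\ref{theorem:2} over $\widehat{\mathcal{U}}$ (via the two inclusions) is exactly that routine argument. The two points of care you flag --- the matching STP factor order and the restriction of the sweep to $\widehat{\mathcal{F}}$ --- are handled consistently with the statement of Theorem~\ref{theorem:2}, so nothing further is needed.
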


\begin{prop}:  \label{def:com_fault}
For a given set of input vectors $\sv{U}^\ast \subseteq \widehat{\mathcal{U}}$, any input vector $\mathbf{U}_i \in {\mathcal{U^{*}}}$ can detect a set of fault vectors ${\mathcal{F}_{{\mathbf{U}}_i}}$ (\text{Proposition}~\ref{cor:thm1.1}). The set of common fault vectors detectable by a set of given input vectors  can be estimated as:
\begin{equation*}
{\mathcal{F}_{{\mathcal{U}}^\ast}} = \bigcap_{i=1}^{\xi}{\mathcal{F}_{{\mathbf{U}}_i}}\,;\,\mathbf{U}_i \in {\mathcal{U}}^\ast,
\end{equation*}
where $\xi = card({\mathcal{U}}^\ast)$ is the cardinality of the set ${\mathcal{U}}^\ast$.
\end{prop}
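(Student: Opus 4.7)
The plan is to reduce this statement to a direct application of Proposition~\ref{cor:thm1.1} together with the set-theoretic definition of common detectability. First, I would fix the interpretation of ``common fault vectors detectable by a set of given input vectors'': a fault vector $\mathbf{F}$ is said to be commonly detectable by $\mathcal{U}^\ast$ precisely when each input $\mathbf{U}_i \in \mathcal{U}^\ast$ individually detects $\mathbf{F}$ in the sense of Theorem~\ref{theorem:1.1}, i.e.\ $\widetilde{\mathrm{H}}\mathbf{U}_i \mathbf{F} \ne \widetilde{\mathrm{H}}\mathbf{U}_i \mathbf{F}_0$ for every $\mathbf{U}_i \in \mathcal{U}^\ast$. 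This is the natural working definition of $\mathcal{F}_{\mathcal{U}^\ast}$.

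Next, I would establish a double inclusion. For the forward direction, take an arbitrary $\mathbf{F} \in \mathcal{F}_{\mathcal{U}^\ast}$. By the definition above, for every $\mathbf{U}_i \in \mathcal{U}^\ast$ the detectability inequality $\widetilde{\mathrm{H}}\mathbf{U}_i \mathbf{F} \ne \widetilde{\mathrm{H}}\mathbf{U}_i \mathbf{F}_0$ holds, so Proposition~\ref{cor:thm1.1} gives $\mathbf{F} \in \mathcal{F}_{\mathbf{U}_i}$ for every $i = 1,\dots,\xi$, and therefore $\mathbf{F} \in \bigcap_{i=1}^{\xi} \mathcal{F}_{\mathbf{U}_i}$. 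For the reverse direction, take $\mathbf{F} \in \bigcap_{i=1}^{\xi} \mathcal{F}_{\mathbf{U}_i}$; then $\mathbf{F} \in \mathcal{F}_{\mathbf{U}_i}$ for each $i$, which by Proposition~\ref{cor:thm1.1} exactly says the detectability inequality holds under every $\mathbf{U}_i \in \mathcal{U}^\ast$, so $\mathbf{F}$ is commonly detectable, i.e.\ $\mathbf{F} \in \mathcal{F}_{\mathcal{U}^\ast}$.

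There is no real obstacle here: the statement is essentially bookkeeping on top of the single-input characterization already proved. The only subtlety worth flagging is that ``common detection'' must be parsed as \emph{each input in $\mathcal{U}^\ast$ distinguishes $\mathbf{F}$ from $\mathbf{F}_0$}, rather than as ``the joint input pattern detects $\mathbf{F}$''; once this convention is pinned down, the intersection formula is immediate. I would close with a brief remark that $\mathcal{F}_{\mathcal{U}^\ast}$ may be empty, which correctly signals that the given experimental input panel $\mathcal{U}^\ast$ is insufficient to guarantee detection of any fault in $\widehat{\mathcal{F}}$ and therefore motivates the observability-improvement algorithm discussed subsequently in the paper.
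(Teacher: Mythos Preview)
Your proposal is correct and appropriately identifies the statement as essentially definitional bookkeeping on top of Proposition~\ref{cor:thm1.1}. The paper itself omits the proof entirely (per Remark~2, only the two main theorems are proved and the remaining results ``can be given following the same arguments''), so your double-inclusion argument is exactly the kind of routine verification the authors had in mind.
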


\begin{prop}: \label{cor:test_set}
Let $\widehat{\mathcal{F}}$ be a set of permissible fault vectors for a BM in (\ref{eq:PO}) and a fault vector $\mathbf{F}_i \in \widehat{\mathcal{F}}$. Let ${\mathcal{U}_{{\mathbf{F}}_i}}$ be a set of input vectors  that can detect a fault vector $\mathbf{F}_i$ (see {\text{Proposition}~\ref{cor:thm1.2}}). Then, an input test set ${\mathcal{U}}_\text{T}$ can be generated as:
\begin{equation*}
{\mathcal{U}}_{\rm{T}}  = \bigcup_{i=1}^{\mu} {\mathcal{U}_{{\mathbf{F}}_i}} \, ; \, \mathbf{F}_i \in \widehat{\mathcal{F}},
\end{equation*}
where $\mu = card(\widehat{\mathcal{F}})$.
\end{prop}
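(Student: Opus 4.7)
The plan is to verify that the union construction $\mathcal{U}_{\rm{T}} = \bigcup_{i=1}^{\mu} \mathcal{U}_{\mathbf{F}_i}$ indeed constitutes a valid input test set, in the sense that every fault vector $\mathbf{F}_i \in \widehat{\mathcal{F}}$ which is detectable by some permissible input is detectable by at least one input drawn from $\mathcal{U}_{\rm{T}}$. Since $\mu = \mathrm{card}(\widehat{\mathcal{F}})$ enumerates every permissible fault exactly once, the proof is essentially a set-theoretic check that leverages Proposition~\ref{cor:thm1.2} per fault.

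First, I would invoke Proposition~\ref{cor:thm1.2} for each $\mathbf{F}_i \in \widehat{\mathcal{F}}$ to obtain the explicit characterization of $\mathcal{U}_{\mathbf{F}_i}$ as the set of permissible input vectors $\mathbf{U}_j$ satisfying the detection inequality $\widetilde{\rm{H}}^{'} \mathbf{F}_i \mathbf{U}_j \neq \widetilde{\rm{H}}^{'} \mathbf{F}_0 \mathbf{U}_j$. Next, by the elementary property of the union, the inclusion $\mathcal{U}_{\mathbf{F}_i} \subseteq \mathcal{U}_{\rm{T}}$ holds for each index $i \in \{1,\dots,\mu\}$. Consequently, whenever $\mathcal{U}_{\mathbf{F}_i}$ is non-empty, at least one input capable of detecting $\mathbf{F}_i$ is already contained in $\mathcal{U}_{\rm{T}}$. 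Enumerating this argument over the finite set $\widehat{\mathcal{F}}$ yields full coverage, establishing the test set property.

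The main subtlety, rather than a technical obstacle, is handling the faults for which $\mathcal{U}_{\mathbf{F}_i} = \emptyset$. Such faults contribute nothing to the union and remain undetectable under the current output (reporter) structure; however, the statement of the proposition is not compromised, since the union still produces the largest useful test set over the currently detectable portion of $\widehat{\mathcal{F}}$. These undetectable cases are precisely what motivate the observability improvement procedure of Algorithm 1 later in the paper, and could be remarked upon to connect the result to its sequel. Because the proof reduces to a straightforward application of Proposition~\ref{cor:thm1.2} combined with the union operation, no STP manipulation or matrix identity is required, and the argument follows the same template as the earlier common-fault set result in Proposition~\ref{def:com_fault}.
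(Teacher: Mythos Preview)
Your proposal is correct and aligns with the paper's approach: the paper explicitly omits the proof of this proposition (see Remark~4.2, which states that all proofs other than the two main theorems are omitted as they follow the same arguments). Your set-theoretic verification via Proposition~\ref{cor:thm1.2} and the union inclusion is exactly the kind of straightforward argument the authors had in mind, and your observation about the $\mathcal{U}_{\mathbf{F}_i} = \emptyset$ case connecting to Algorithm~1 matches the paper's own commentary immediately following the proposition.
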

The test set generated by this method is not optimal, but it is useful to know the important homeostatic inputs and the fault coverage.

\begin{prop}:
\label{fault_coverage}
For a test set ${\mathcal{U}}_\text{T}$ for a BM in (\ref{eq:PO}), fault coverage ${\mathcal{F}}_{C} \subseteq \widehat{\mathcal{F}}$ is given by
\begin{equation*}
{\mathcal{F}}_{C} = \bigcup_{i = 1}^\zeta {\mathcal{F}_{{\mathbf{U}}_i}},
\end{equation*}
where $\zeta = card ({\mathcal{U}}_\text{T})$ and ${\mathcal{F}_{{\mathbf{U}}_i}}$ is set of faults detectable by input vector $\mathbf{U}_i \in {\mathcal{U}}_\text{T}$ (see {\text{Proposition}~\ref{cor:thm1.1}}).
\end{prop}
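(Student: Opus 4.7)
The plan is to show that the proposition is essentially a direct consequence of how fault coverage is defined together with Proposition~\ref{cor:thm1.1}. First I would fix notation: a fault $\mathbf{F}_i \in \widehat{\mathcal{F}}$ is said to be \emph{covered} by the test set $\mathcal{U}_{\rm{T}}$ iff there exists at least one $\mathbf{U}_j \in \mathcal{U}_{\rm{T}}$ that detects it, i.e.\ $\widetilde{\rm{H}}\mathbf{U}_j\mathbf{F}_i \ne \widetilde{\rm{H}}\mathbf{U}_j\mathbf{F}_0$. This captures the standard notion that a test set ``covers'' a fault whenever some pattern in the set exposes the fault at the output.

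Next I would carry out the two inclusions. For the inclusion $\mathcal{F}_C \subseteq \bigcup_{i=1}^{\zeta}\mathcal{F}_{\mathbf{U}_i}$, pick any $\widetilde{\mathbf{F}} \in \mathcal{F}_C$; by the definition of coverage there is some $\mathbf{U}_j \in \mathcal{U}_{\rm{T}}$ with $\widetilde{\rm{H}}\mathbf{U}_j\widetilde{\mathbf{F}} \ne \widetilde{\rm{H}}\mathbf{U}_j\mathbf{F}_0$, and then Proposition~\ref{cor:thm1.1} gives $\widetilde{\mathbf{F}} \in \mathcal{F}_{\mathbf{U}_j} \subseteq \bigcup_i \mathcal{F}_{\mathbf{U}_i}$. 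The reverse inclusion is symmetric: any $\widetilde{\mathbf{F}}$ in the union lies in some $\mathcal{F}_{\mathbf{U}_j}$ with $\mathbf{U}_j \in \mathcal{U}_{\rm{T}}$, and Proposition~\ref{cor:thm1.1} then guarantees that $\mathbf{U}_j$ detects $\widetilde{\mathbf{F}}$, so $\widetilde{\mathbf{F}} \in \mathcal{F}_C$.

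The statement $\mathcal{F}_C \subseteq \widehat{\mathcal{F}}$ is automatic, since each $\mathcal{F}_{\mathbf{U}_i} \subseteq \widehat{\mathcal{F}}$ by Proposition~\ref{cor:thm1.1} and a union of subsets of $\widehat{\mathcal{F}}$ is again a subset of $\widehat{\mathcal{F}}$. Honestly, there is no real obstacle here: the result is a bookkeeping lemma obtained by unfolding the definition of coverage and inserting the per-input characterization from Proposition~\ref{cor:thm1.1}. The only thing worth being slightly careful about is making the definition of ``fault covered by a test set'' explicit at the outset, since the proposition otherwise looks tautological; once this is stated the two set inclusions become one-line arguments and together give the claimed equality.
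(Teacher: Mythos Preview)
Your proposal is correct and complete. The paper does not actually give a proof of this proposition: Remark~2 states that only the two main theorems are proved and all other proofs are omitted ``as those proofs can be given following the same arguments.'' Your argument---making the notion of coverage explicit and then reading off the two inclusions from Proposition~\ref{cor:thm1.1}---is exactly the kind of routine unfolding the authors had in mind, so there is nothing to compare against beyond noting that your write-up fills in what the paper leaves implicit.
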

Due to the limited number of homeostatic inputs, many of the fault vectors may remain undetectable in a real biological system. This affects the fault coverage. The following theorem provides a way to use multiple input vectors for unique fault detection. 

\begin{cor}:  \label{theorem:3}
\textit{Generalized uniqueness} \\
A set of input vectors ${\mathcal{U}}^\ast \subseteq \widehat{\mathcal{U}}$ can uniquely determine a fault vector $\widetilde{\mathbf{F}} \in {\mathcal{F}}$ iff ${\mathcal{F}_{{\mathcal{U}}^\ast}}  = \{\widetilde{\mathbf{F}}\}$. 
\end{cor}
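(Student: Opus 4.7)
The plan is to treat this corollary as a set-theoretic extension of Theorem~\ref{theorem:2} combined with the definition of $\mathcal{F}_{\mathcal{U}^\ast}$ introduced in Proposition~\ref{def:com_fault}. Since $\mathcal{F}_{\mathcal{U}^\ast} = \bigcap_{i=1}^{\xi} \mathcal{F}_{\mathbf{U}_i}$, the statement reduces to expressing ``unique determination by a set of inputs'' through the membership conditions already characterized by Proposition~\ref{cor:thm1.1}. I would prove both directions separately, as in the earlier theorems of the section.

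For the sufficiency direction, I would assume $\mathcal{F}_{\mathcal{U}^\ast}=\{\widetilde{\mathbf{F}}\}$ and show that the set $\mathcal{U}^\ast$ suffices to isolate $\widetilde{\mathbf{F}}$ from every other hazardous fault. First, $\widetilde{\mathbf{F}} \in \mathcal{F}_{\mathbf{U}_i}$ for each $\mathbf{U}_i \in \mathcal{U}^\ast$ by Proposition~\ref{def:com_fault}, so by Proposition~\ref{cor:thm1.1} we have $\widetilde{\mathrm{H}}\mathbf{U}_i\widetilde{\mathbf{F}} \ne \widetilde{\mathrm{H}}\mathbf{U}_i\mathbf{F}_0$ for every $\mathbf{U}_i$. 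Next, for any other $\mathbf{F}_j \in \widehat{\mathcal{F}}$ with $\mathbf{F}_j \ne \widetilde{\mathbf{F}}$, the singleton condition forces $\mathbf{F}_j \notin \mathcal{F}_{\mathcal{U}^\ast}$, hence some $\mathbf{U}_k \in \mathcal{U}^\ast$ satisfies $\widetilde{\mathrm{H}}\mathbf{U}_k\mathbf{F}_j = \widetilde{\mathrm{H}}\mathbf{U}_k\mathbf{F}_0$. Combined with $\widetilde{\mathrm{H}}\mathbf{U}_k\widetilde{\mathbf{F}} \ne \widetilde{\mathrm{H}}\mathbf{U}_k\mathbf{F}_0$, the input $\mathbf{U}_k$ distinguishes $\widetilde{\mathbf{F}}$ from $\mathbf{F}_j$, matching the criterion of Theorem~\ref{theorem:2} applied jointly across the inputs of $\mathcal{U}^\ast$.

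For the necessity direction, I would assume that $\mathcal{U}^\ast$ uniquely determines $\widetilde{\mathbf{F}}$, interpreted as the natural multi-input extension of Theorem~\ref{theorem:2}: (i) every $\mathbf{U}_i \in \mathcal{U}^\ast$ separates $\widetilde{\mathbf{F}}$ from $\mathbf{F}_0$, and (ii) every other $\mathbf{F}_j \in \widehat{\mathcal{F}}$ is indistinguishable from $\mathbf{F}_0$ under at least one $\mathbf{U}_k \in \mathcal{U}^\ast$. Translating these conditions via Proposition~\ref{cor:thm1.1} yields $\widetilde{\mathbf{F}} \in \mathcal{F}_{\mathbf{U}_i}$ for all $i$, hence $\widetilde{\mathbf{F}} \in \mathcal{F}_{\mathcal{U}^\ast}$, and for each $\mathbf{F}_j \neq \widetilde{\mathbf{F}}$ there is an $\mathbf{U}_k$ with $\mathbf{F}_j \notin \mathcal{F}_{\mathbf{U}_k}$, so $\mathbf{F}_j \notin \mathcal{F}_{\mathcal{U}^\ast}$. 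Taken together, $\mathcal{F}_{\mathcal{U}^\ast} = \{\widetilde{\mathbf{F}}\}$.

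The main conceptual obstacle, rather than a computational one, is fixing the right meaning of ``uniquely determine'' for a set of inputs, since Theorem~\ref{theorem:2} is stated only for a single input. I would handle this by explicitly writing out the generalized criterion in the first line of the proof and noting that it is precisely what is needed to mirror the single-input case. Once that interpretation is pinned down, the rest is a routine intersection-versus-membership argument, which is why the paper can legitimately present the result as a corollary rather than a theorem.
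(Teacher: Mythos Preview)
Your proposal is correct and follows exactly the route the paper intends: the authors explicitly omit this proof (see the remark that only the two main theorems are proved, the rest ``can be given following the same arguments''), and your argument is the natural set-theoretic unpacking of Proposition~\ref{def:com_fault} combined with the single-input existence criterion of Proposition~\ref{cor:thm1.1}, mirroring the necessity/sufficiency structure of Theorem~\ref{theorem:1.1}. Your observation that the only non-routine step is fixing the multi-input meaning of ``uniquely determine'' is apt, and your chosen interpretation is the one consistent with the surrounding propositions and with Theorem~\ref{theorem:2}.
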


\begin{prop}:
If ${\mathcal{F}_{{\mathcal{U}}^\ast}}  = \emptyset$, a set of input vectors ${\mathcal{U}}^\ast$ cannot detect any fault vector. 
\end{prop}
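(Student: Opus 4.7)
The plan is to derive this as an immediate consequence of the intersection definition of $\mathcal{F}_{\mathcal{U}^\ast}$ given in Proposition~\ref{def:com_fault}, via a contrapositive argument. First I would recall that, by combining Proposition~\ref{def:com_fault} with Proposition~\ref{cor:thm1.1} (itself a refinement of Theorem~\ref{theorem:1.1}), the set $\mathcal{F}_{\mathcal{U}^\ast} = \bigcap_{i=1}^{\xi} \mathcal{F}_{\mathbf{U}_i}$ is precisely the collection of fault vectors that every input $\mathbf{U}_i \in \mathcal{U}^\ast$ can individually distinguish from the null-fault vector $\mathbf{F}_0$, in the sense that $\widetilde{\mathrm{H}} \mathbf{U}_i \mathbf{F}_j \neq \widetilde{\mathrm{H}} \mathbf{U}_i \mathbf{F}_0$ for every index $i$. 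Within the framework of this paper, saying that the set $\mathcal{U}^\ast$ detects a fault vector means exactly that the fault lies in this common detection set $\mathcal{F}_{\mathcal{U}^\ast}$; this matches the usage in Corollary~\ref{theorem:3}, where the singleton condition $\mathcal{F}_{\mathcal{U}^\ast} = \{\widetilde{\mathbf{F}}\}$ characterises unique identification by $\mathcal{U}^\ast$.

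The argument then proceeds by contraposition. Suppose $\mathcal{U}^\ast$ detects some fault vector $\mathbf{F}^\ast \in \widehat{\mathcal{F}}$. By the definition just recalled, $\mathbf{F}^\ast \in \mathcal{F}_{\mathbf{U}_i}$ for every $\mathbf{U}_i \in \mathcal{U}^\ast$, so that $\mathbf{F}^\ast$ belongs to each of the sets entering the intersection in Proposition~\ref{def:com_fault}. Consequently $\mathbf{F}^\ast \in \bigcap_{i=1}^{\xi} \mathcal{F}_{\mathbf{U}_i} = \mathcal{F}_{\mathcal{U}^\ast}$, which contradicts the hypothesis $\mathcal{F}_{\mathcal{U}^\ast} = \emptyset$. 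The contrapositive therefore establishes the statement.

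The main obstacle is not computational but conceptual: one has to pin down that ``detected by the set $\mathcal{U}^\ast$'' is the common, intersection-based notion of Proposition~\ref{def:com_fault} rather than a weaker union-based one (the latter would instead be the natural quantity for fault \emph{coverage}, as in Proposition~\ref{fault_coverage}). Once this interpretation is fixed, the proof reduces to a routine unpacking of definitions, and no further semi-tensor-product manipulations beyond those already used in the existence theorem are required.
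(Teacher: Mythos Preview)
Your proposal is correct and matches the paper's intent: the paper explicitly omits the proof of this proposition (see the Remark preceding Theorem~\ref{theorem:1.1}, stating that all proofs other than the two main theorems are omitted since they follow by the same arguments), and your contrapositive unpacking of Proposition~\ref{def:com_fault} is precisely the routine argument the authors have in mind. Your additional remark distinguishing the intersection-based notion of detection by a \emph{set} of inputs from the union-based fault coverage of Proposition~\ref{fault_coverage} is a useful clarification that the paper leaves implicit.
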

\begin{prop}:
If $card({\mathcal{F}_{{\mathcal{U}}^\ast}}) \neq 1 $, a set of input vectors ${\mathcal{U}}^\ast$ cannot detect any fault vector uniquely. 
\end{prop}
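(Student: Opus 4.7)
The plan is to derive this proposition as an immediate consequence of the preceding Corollary~\ref{theorem:3} (Generalized Uniqueness) together with the proposition just stated before it, by splitting the hypothesis $\mathrm{card}(\mathcal{F}_{\mathcal{U}^\ast}) \neq 1$ into two exhaustive cases: $\mathrm{card}(\mathcal{F}_{\mathcal{U}^\ast}) = 0$ and $\mathrm{card}(\mathcal{F}_{\mathcal{U}^\ast}) \geq 2$. Since Corollary~\ref{theorem:3} is an \textit{if and only if} characterization, the direction I need is its contrapositive: unique detectability fails whenever $\mathcal{F}_{\mathcal{U}^\ast}$ is not a singleton.

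For the first case, if $\mathrm{card}(\mathcal{F}_{\mathcal{U}^\ast}) = 0$, then $\mathcal{F}_{\mathcal{U}^\ast} = \emptyset$, and the immediately preceding proposition states that $\mathcal{U}^\ast$ cannot detect any fault vector at all. In particular, no fault vector can be uniquely identified by $\mathcal{U}^\ast$, so the conclusion is immediate.

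For the second case, suppose $\mathrm{card}(\mathcal{F}_{\mathcal{U}^\ast}) \geq 2$, so there exist distinct $\mathbf{F}_a, \mathbf{F}_b \in \mathcal{F}_{\mathcal{U}^\ast}$. By Proposition~\ref{def:com_fault}, every $\mathbf{U}_i \in \mathcal{U}^\ast$ satisfies $\mathbf{F}_a, \mathbf{F}_b \in \mathcal{F}_{\mathbf{U}_i}$, so both faults are jointly detected by the whole input set. Now suppose for contradiction that some $\widetilde{\mathbf{F}} \in \mathcal{F}$ could be uniquely identified by $\mathcal{U}^\ast$. Then by Corollary~\ref{theorem:3} we would need $\mathcal{F}_{\mathcal{U}^\ast} = \{\widetilde{\mathbf{F}}\}$, a singleton; but $\mathcal{F}_{\mathcal{U}^\ast}$ already contains the two distinct elements $\mathbf{F}_a$ and $\mathbf{F}_b$, a contradiction. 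Hence no fault vector is uniquely detectable by $\mathcal{U}^\ast$.

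There is essentially no hard part here, since Corollary~\ref{theorem:3} already does all the substantive work through Theorem~\ref{theorem:1.1} and Proposition~\ref{def:com_fault}. The only thing to be careful about is framing: the proposition is a negative (impossibility) statement, so I must use the contrapositive form of the biconditional in Corollary~\ref{theorem:3} rather than trying to construct a detecting input set directly. The case split on whether $\mathcal{F}_{\mathcal{U}^\ast}$ is empty or contains at least two elements is the cleanest way to cover the hypothesis $\mathrm{card}(\mathcal{F}_{\mathcal{U}^\ast}) \neq 1$ exhaustively.
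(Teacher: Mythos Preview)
Your proof is correct and is precisely the argument the paper has in mind: the proposition is nothing more than the contrapositive of Corollary~\ref{theorem:3} (Generalized Uniqueness), split into the two exhaustive cases $\mathrm{card}(\mathcal{F}_{\mathcal{U}^\ast})=0$ and $\mathrm{card}(\mathcal{F}_{\mathcal{U}^\ast})\geq 2$. The paper in fact omits this proof explicitly (see the remark that all proofs other than the two main theorems ``can be given following the same arguments''), so your short derivation from Corollary~\ref{theorem:3} and the preceding proposition is exactly what is intended.
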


\begin{defn}:
If $card({\mathcal{F}_{{\mathcal{U}}^\ast}}) > 1$, a members of the set ${\mathcal{F}_{{\mathcal{U}}^\ast}} $ are called \textit{indistinguishable} faults under a set of input vectors $\mathcal{U^{*}}$.
\end{defn}
%
\begin{defn}:
If two fault vectors $\mathbf{F}_j$ and $\mathbf{F}_k$ are indistinguishable and 
\begin{equation*}
\widetilde{\rm{H}}\mathbf{U}_j\mathbf{F}_i = \widetilde{\rm{H}}\mathbf{U}_j \mathbf{F}_k\, ;\, \forall \mathbf{U}_j \in \mathcal{U^{*}}\,, \,  \mathbf{F}_i, \mathbf{F}_k \in {\mathcal{F}_{{\mathcal{U}}^\ast}}\,, \, \mathbf{F}_i \ne \mathbf{F}_k,
\end{equation*}
then fault vectors $\mathbf{F}_i$ and $\mathbf{F}_k$ are defined as \textit{equivalent}. Symbolically $\mathbf{F}_i \equiv \mathbf{F}_k$.
\end{defn}
%
In {Proposition~\ref{cor:thm1.2}}, if some fault vector results in ${\mathcal{U}_{\mathbf{F}}} = \emptyset$ then that fault vector is said to be undetectable. Such faults may not be harmful in the pathways under consideration, but may be harmful in other dependent pathways. Therefore it is important to detect such faults. 
 In {Corollary~\ref{theorem:3}}, if set ${\mathcal{F}_{{\mathcal{U}}^\ast}}$ represents equivalent fault vectors, then fault detection becomes ambiguous or undetectable. For these conditions, a new reporter design is necessary. Some faults may not be detectable at primary outputs, but may be detectable at the output of internal blocks. Therefore, observability of faults at every internal node is used to decide a new reporter. 

Let a set of undetectable fault  vectors for a set of input vectors $\mathcal{U}$ be denoted as $\overline{\mathcal{F}}_\mathcal{U}$. Set $\overline{\mathcal{F}}_\mathcal{U}$ contains all the faults for which ${\mathcal{U}_{\mathbf{F}}} = \emptyset$ and the faults which are equivalent.  Algorithm~\ref{alg:observe} results into best possible reporter(s) for improved fault detection. ${\mathcal{F}^{ij}_\mathcal{U}}$ indicates a set of detectable faults considering node $x^i_j$ as `reporter'. Reporter is an additional input of the network which provides additional information for fault detection. Selection of each reporter augments the output vector by one. Algorithm~\ref{alg:observe} results in the set of best possible reporters, ${\mathcal{X}}$ (best reporter at ${\mathcal{X}}(1)$). It is clear from Algorithm~\ref{alg:observe} that the complexity of this algorithm is $O(2^{\sum^{m}_{i=1}{n_i}}) $ or $O(2^{\gamma+\lambda})$ (using (\ref{block_eqn})). The modular design keeps the structure matrices for the individual blocks accessible for algebraic manipulation.

\begin{algorithm}[!ht]
\caption{Improvement in observability} \label{alg:observe}
Define $i:= 1,\hdots,m$; $j:= 1,\hdots,n_i$; $k = 0$ ; ${\mathcal{X}} = \emptyset$ \; 
Augment $x^i_j$ as the $(\beta + 1)^\text{th}$ output \;
Compute a set of detectable faults ${\mathcal{F}}^{ij}_\mathcal{U} \subseteq \overline{\mathcal{F}}_\mathcal{U} \, \forall i,j$ ({\text{Proposition}~ \ref{def:com_fault}}) \;
\While{$\overline{\mathcal{F}}_\mathcal{U} \neq \emptyset$ or  $card(\overline{\mathcal{F}}_\mathcal{U})$ decreases}
{
$({i_p},{j_p}) = \underset{\forall i,j}{\mathrm{argmax}}$  $card ({\mathcal{F}}^{ij}_\mathcal{U})$ \;
Select $x^{{i_p}}_{{j_p}}$ as reporter \;
$\overline{\mathcal{F}}_\mathcal{U} \leftarrow \overline{\mathcal{F}}_\mathcal{U} \setminus {\mathcal{F}}^{{i_p}{j_p}}_\mathcal{U}$ \;
${\mathcal{F}}^{ij}_\mathcal{U} \leftarrow {\mathcal{F}}^{ij}_\mathcal{U} \subseteq \overline{\mathcal{F}}_\mathcal{U}$ \;
${\mathcal{X}}(k) \leftarrow {x^{i_p}_{j_p}}$ \;
$k \leftarrow k+1$ \;
}
\Return{${\mathcal{X}}$} \;
\end{algorithm}

It is possible in principle to put some additional constraints depending on some prior biological knowledge of certain nodes in choosing the best reporter in each iteration. The accessibility of the new reporters is restricted to the output nodes of the primary sub-blocks only. The number of such sub-blocks is $\gamma+\lambda$, which is small compared to the total number of internal nodes in the system keeping the computational complexity of the algorithm manageable.
\begin{thm}:  \label{theorem:5}
\textit{Existence of Intervention} \\
For a BM in (\ref{eq:PO}) with a fault vector $\widetilde{\mathbf{F}} \in \widehat{\mathcal{F}}$ and an input vector $\widetilde{\mathbf{U}} \in \widehat{\mathcal{U}}$, a drug vector $\mathbf{D}_i \in \mathcal{D}$ exists if
\begin{equation*}
\widetilde{\rm{H}} \widetilde{\mathbf{F}} \widetilde{\mathbf{U}} \mathbf{D}_i = \widetilde{\rm{H}} \mathbf{F}_0 \widetilde{\mathbf{U}} \mathbf{D}_0, 
\end{equation*}
where $\widetilde{\rm{H}} = {\rm{H W}}_{[3^\gamma, 2^\alpha]}$, and $\lambda$ is a number of available drugs. 
\end{thm}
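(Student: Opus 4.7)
The plan is to apply the same linear-algebraic rewrite used in Theorem~\ref{theorem:1.1}, but now with the drug vector $\tv{D}$ playing the role of a free variable rather than being pinned at $\tv{D}_0$. Starting from the linear form $\tv{Y} = \mv{H}\tv{UFD}$ in (\ref{eq:PO}), I would first rearrange the ordering of the factors by inserting a swap matrix to move $\tv{F}$ to the leftmost position, obtaining
\begin{equation*}
\tv{Y} = \mv{H}\,\mv{W}_{[3^\gamma,2^\alpha]}\,\tv{F}\tv{U}\tv{D} = \tmv{H}\,\tv{F}\tv{U}\tv{D},
\end{equation*}
where $\tmv{H} = \mv{H}\,\mv{W}_{[3^\gamma,2^\alpha]}$ matches the definition in the statement. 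This step is a direct application of the swap-matrix identity $\mv{W}_{[p,q]}\tv{x}\tv{y} = \tv{y}\tv{x}$ from \cite{Cheng2010b}, exactly as in the proof of Theorem~\ref{theorem:1.1}.

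Next, I would write down the two output vectors that need to be compared. The baseline (healthy) output under the given input $\widetilde{\tv{U}}$, with the null-fault vector $\tv{F}_0$ and no drug $\tv{D}_0$, is
\begin{equation*}
\tv{Y}(\widetilde{\tv{U}},\tv{F}_0,\tv{D}_0) = \tmv{H}\,\tv{F}_0\,\widetilde{\tv{U}}\,\tv{D}_0.
\end{equation*}
The output of the faulty system under the same input, the actual fault $\widetilde{\tv{F}}$, and a candidate drug $\tv{D}_i$ is
\begin{equation*}
\tv{Y}(\widetilde{\tv{U}},\widetilde{\tv{F}},\tv{D}_i) = \tmv{H}\,\widetilde{\tv{F}}\,\widetilde{\tv{U}}\,\tv{D}_i.
\end{equation*}
Since the biological purpose of an intervention is to bring the faulty system's output back to the fault-free homeostatic value, the existence of a valid drug vector amounts to the existence of a $\tv{D}_i \in \sv{D}$ making these two outputs coincide. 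Equating them gives precisely the stated condition.

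The main ``obstacle'' here is really only a bookkeeping one: the ordering of the factors in the STP expression must be handled consistently so that the swap matrix $\mv{W}_{[3^\gamma,2^\alpha]}$ has the right dimensions and acts on the right adjacent pair. Once the rewrite $\tv{Y} = \tmv{H}\,\tv{F}\,\tv{U}\,\tv{D}$ is established, the theorem is essentially a definitional statement about what ``successful intervention'' means in the linear framework. In particular, no inequality-to-existence argument is needed beyond observing that $\sv{D}$ is finite and the condition can be checked element-wise; if some $\tv{D}_i \in \sv{D}$ satisfies the equality then the intervention exists, which is exactly what the ``if'' direction asserts.
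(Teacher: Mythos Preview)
Your proposal is correct and matches the approach the paper intends: the paper explicitly omits the proof of this theorem (see the remark preceding Section~\ref{BM2}, stating that proofs other than Theorems~\ref{theorem:1.1} and~\ref{theorem:7.1} ``can be given following the same arguments''), and your argument is precisely the swap-matrix rewrite of Theorem~\ref{theorem:1.1} adapted so that $\tv{D}$ remains free and the equality $\tv{Y}(\widetilde{\tv{U}},\widetilde{\tv{F}},\tv{D}_i)=\tv{Y}(\widetilde{\tv{U}},\tv{F}_0,\tv{D}_0)$ encodes successful intervention. The swap identity you invoke, $\tv{U}\tv{F}=\mv{W}_{[3^\gamma,2^\alpha]}\tv{F}\tv{U}$, is applied with the correct dimensions and is consistent with the paper's own usage in Example~\ref{example:drug}.
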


\begin{prop}:  \label{cor1:thm5}
For a fault vector $\widetilde{\mathbf{F}} \in \mathcal{F}$ and a set of input vectors $\widehat{\mathcal{U}}$ in BM in (\ref{eq:PO}), a set of drug vector ${\mathcal{D}_{\widetilde{\mathbf{F}}}}$ which can control the network is given by:
\begin{equation*}
{\mathcal{D}_{\widetilde{\mathbf{F}}}} = \{ \mathbf{D}_j\,\mid\, \widetilde{\rm{H}}\widetilde{\mathbf{F}} \widetilde{\mathbf{U}} \mathbf{D}_j = \widetilde{\rm{H}} \mathbf{F}_0 \widetilde{\mathbf{U}} \mathbf{D}_0\, ; \mathbf{D}_j \in \mathcal{D} \text{ and } \widetilde{\mathbf{U}} \in \widehat{\mathcal{U}} \}.
\end{equation*}
\end{prop}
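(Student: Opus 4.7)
The plan is to derive this proposition directly from Theorem~\ref{theorem:5} (\emph{Existence of Intervention}). That theorem already characterises, for a single fault $\widetilde{\mathbf{F}}$ and single input $\widetilde{\mathbf{U}}$, precisely when a drug $\mathbf{D}_i$ restores the healthy output: namely $\widetilde{\rm{H}}\widetilde{\mathbf{F}}\widetilde{\mathbf{U}}\mathbf{D}_i = \widetilde{\rm{H}}\mathbf{F}_0\widetilde{\mathbf{U}}\mathbf{D}_0$, where $\widetilde{\rm{H}} = {\rm{H}}\mathrm{W}_{[3^\gamma,2^\alpha]}$. The proposition merely packages every admissible $\mathbf{D}_j \in \mathcal{D}$ that enforces this equality (over the permissible input set $\widehat{\mathcal{U}}$) into the set $\mathcal{D}_{\widetilde{\mathbf{F}}}$, so the proof is essentially a two-way set inclusion.

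First I would establish the forward inclusion. Let $\mathbf{D}_j$ lie in the right-hand side set, i.e. assume $\widetilde{\rm{H}}\widetilde{\mathbf{F}}\widetilde{\mathbf{U}}\mathbf{D}_j = \widetilde{\rm{H}}\mathbf{F}_0\widetilde{\mathbf{U}}\mathbf{D}_0$ for every $\widetilde{\mathbf{U}} \in \widehat{\mathcal{U}}$. By the sufficiency direction of Theorem~\ref{theorem:5}, this algebraic identity guarantees that the faulty output $\mathbf{Y}(\widetilde{\mathbf{U}},\widetilde{\mathbf{F}},\mathbf{D}_j)$ equals the healthy output $\mathbf{Y}(\widetilde{\mathbf{U}},\mathbf{F}_0,\mathbf{D}_0)$ for each permissible input; hence $\mathbf{D}_j$ is a controlling drug, and therefore belongs to $\mathcal{D}_{\widetilde{\mathbf{F}}}$.

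Conversely, suppose $\mathbf{D}_j \in \mathcal{D}_{\widetilde{\mathbf{F}}}$. By the definition of a drug that controls the faulty BM, its action must restore the healthy output for every homeostatic input, i.e.\ $\mathbf{H}\widetilde{\mathbf{U}}\widetilde{\mathbf{F}}\mathbf{D}_j = \mathbf{H}\widetilde{\mathbf{U}}\mathbf{F}_0\mathbf{D}_0$ for all $\widetilde{\mathbf{U}} \in \widehat{\mathcal{U}}$. Applying the swap identity $\mathbf{H}\widetilde{\mathbf{U}}\mathbf{F} = \mathbf{H}\mathrm{W}_{[3^\gamma,2^\alpha]}\mathbf{F}\widetilde{\mathbf{U}} = \widetilde{\rm{H}}\mathbf{F}\widetilde{\mathbf{U}}$ from \cite{Cheng2010b} to both sides rewrites this as $\widetilde{\rm{H}}\widetilde{\mathbf{F}}\widetilde{\mathbf{U}}\mathbf{D}_j = \widetilde{\rm{H}}\mathbf{F}_0\widetilde{\mathbf{U}}\mathbf{D}_0$, which is exactly the membership condition for the right-hand side. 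Combining the two inclusions gives the claimed equality.

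The only genuinely routine step is the STP rearrangement that converts between the order $(\mathbf{U},\mathbf{F},\mathbf{D})$ of (\ref{eq:PO}) and the order $(\mathbf{F},\mathbf{U},\mathbf{D})$ used in the statement; this is a direct application of the swap-matrix property. The main obstacle — more a bookkeeping point than a mathematical one — is making the quantifier over $\widehat{\mathcal{U}}$ explicit: the set-builder expression must be read as requiring the algebraic identity to hold simultaneously for every permissible input, since a drug that works only for some inputs cannot be called a controlling drug. Once this reading is fixed, the proposition is simply the set-theoretic packaging of Theorem~\ref{theorem:5}, and no additional machinery is required.
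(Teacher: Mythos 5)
Your proposal is correct and follows exactly the route the paper intends: the authors omit this proof explicitly, stating that it "can be given following the same arguments" as Theorem~\ref{theorem:5}, and your two-way inclusion — using the output identity $\mathbf{Y} = \mathrm{H}\mathbf{U}\mathbf{F}\mathbf{D}$ together with the swap-matrix rewriting $\mathrm{H}\mathbf{U}\mathbf{F}\mathbf{D} = \mathrm{H}\mathrm{W}_{[3^\gamma,2^\alpha]}\mathbf{F}\mathbf{U}\mathbf{D} = \widetilde{\mathrm{H}}\mathbf{F}\mathbf{U}\mathbf{D}$ — is precisely that argument. Your remark that the set-builder condition must be read as holding for every $\widetilde{\mathbf{U}} \in \widehat{\mathcal{U}}$ is also consistent with how the paper applies the proposition in Example~4.1, where both permissible inputs are checked before declaring $\mathcal{D}_{\widetilde{\mathbf{F}}} = \{\delta_4^2\}$.
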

%
%
\begin{exmp}:  \label{example:drug}
{\rm{
Consider a BM as shown in Fig.~\ref{example:drug_intervention}.
\begin{figure}[!ht]
\centering
\includegraphics[scale=0.42]{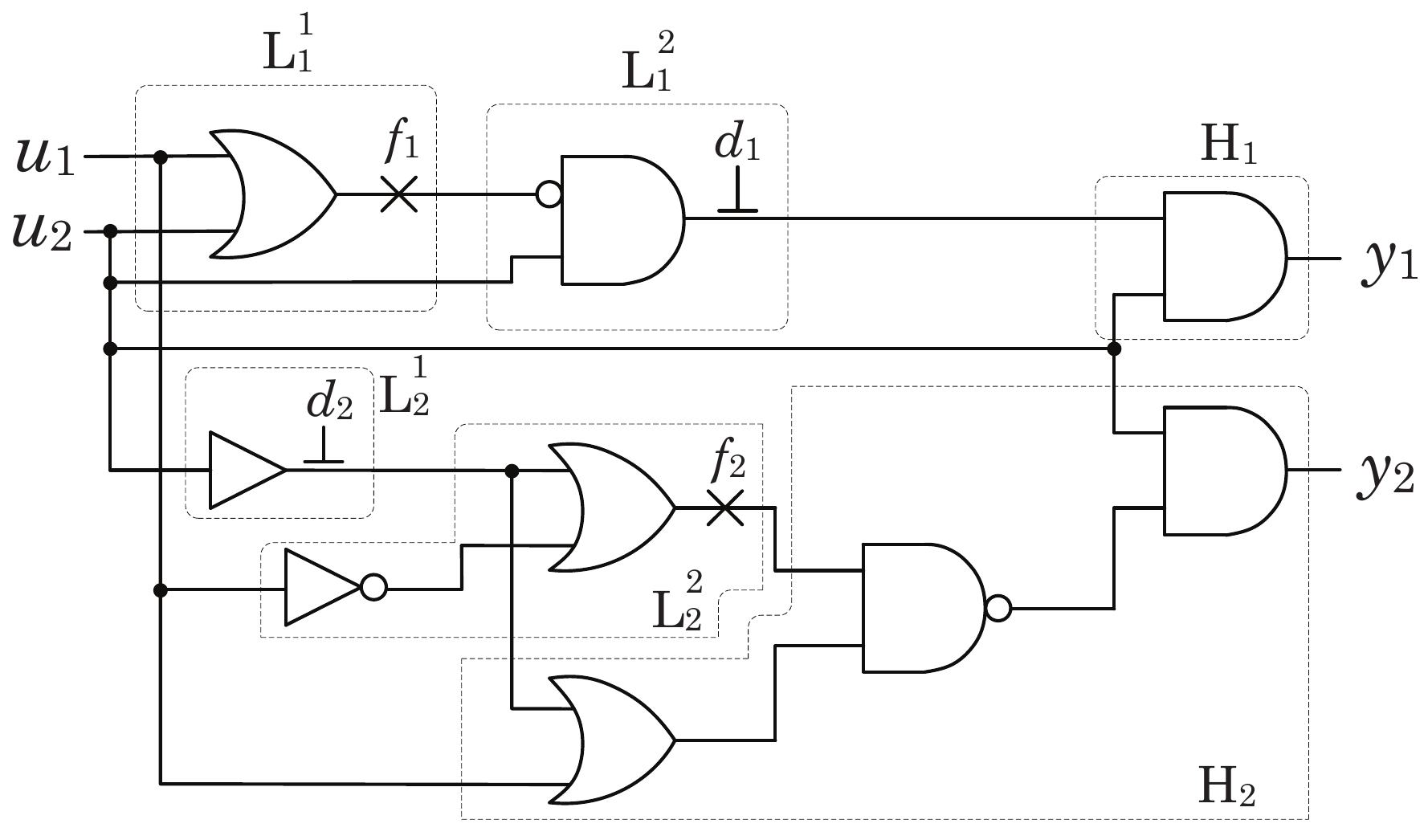}
\caption{Example 1: Boolean map} \label{example:drug_intervention}
\end{figure}
From the network, $\alpha = 2$, $\beta = 2$, $\gamma = 2$, $\lambda = 2$, $m = 2$, $n_1 = 2$, and $n_2 = 2$. For $\mathbf{Y} = \widetilde{\rm{H}}\mathbf{FUD}$, structure matrix $\widetilde{\rm{H}} = \rm{HW}_{[9,4]}$ for the system using (\ref{eq:PO}) can be easily calculated as: 
\begin{align*}
\widetilde{\rm{H}} &= \delta_4\big[4,4,4,4,4,4,4,4,3,4,3,4,4,4,4,4 \mid 3,3,3,3,4,\\
&\hspace{0cm} 4,4,4,3,3,3,3,4,4,4,4 \mid 3,4,3,4,4,4,4,4,3,4,3,4,4, \\
&\hspace{0cm} 4,4,4 \mid 4,4,2,2,4,4,4,4,3,4,1,2,4,4,4,4 \mid 3,3,1,1,4,\\
&\hspace{0cm} 4,4,4,3,3,1,1,4,4,4,4 \mid 3,4,1,2,4,4,4,4,3,4,1,2,4,\\
&\hspace{0cm} 4,4,4 \mid 4,4,4,4,4,4,4,4,3,4,3,4,4,4,4,4 \mid 3,3,3,3,4, \\
&\hspace{0cm} 4,4,4,3,3,3,3,4,4,4,4 \mid 3,4,3,4,4,4,4,4,3,4,3,4,4, \\
& 4,4,4\big]
\end{align*}
If no drugs are applied (i.e., $\mathbf{D}_0 =\delta_4^4$), the system equation can be written as $\mathbf{Y} = \widehat{\rm{H}}\mathbf{FU}$, where
 \begin{align*}
\widehat{\rm{H}} &= \delta_4\big[4,4,4,4 \mid 3, 4, 3, 4 \mid 4,4,4,4 \mid 2,4,2,4 \mid \\
&\hspace{-0.4cm} 1,4,1,4 \mid 2, 4, 2, 4 \mid 4,4,4,4 \mid 3,4,3,4 \mid 4,4,4,4\big]
\end{align*}
Let us assume an input vector $\widetilde{\mathbf{U}} = \delta_4^3$. From Proposition~{\ref{cor:thm1.1}}, a set of detectable faults vectors can be obtained as ${\mathcal{F}_{\widetilde{\mathbf{U}}}} = \delta_9\{2,4,5,6,8\}$. Similarly, for a fault vector $\widetilde{\mathbf{F}} = \delta_9^4$, using Proposition~{\ref{cor:thm1.2}}, a set of input vectors which can detect the fault vector $\widetilde{\mathbf{F}}$ is given by ${\mathcal{U}_{\widetilde{\mathbf{F}}}} = \delta_4\{1,3\}$. Suppose a set of fault vectors $\widehat{\mathcal{F}} = \delta_9\{1,2,3,7\}$ is known. Using Theorem~{\ref{theorem:2}}, an input vector $\widetilde{\mathbf{U}} = \delta_4^3$ can uniquely identify fault $\widetilde{\mathbf{F}}=\delta_9^2$. For the set of given input vectors $\widehat{\mathcal{U}} = \delta_4\{1,2,3\}$, the fault coverage (Proposition~ {\ref{fault_coverage}}) is ${\mathcal{F}}_{C} = \delta_9\{2,4,5,6,8\}$. However, for the set of input vectors $\widehat{\mathcal{U}} = \delta_4\{2,4\}$, fault coverage is ${\mathcal{F}}_{C} = \emptyset$. If a given set of fault vectors is $\widehat{\mathcal{F}} = \delta_9\{1,2,3,4,8\}$, Corollary~{\ref{theorem:3}} says that using $\widetilde{\mathbf{U}} = \delta_4^3$, there is no uniquely detectable fault vectors. In fact all the detectable fault vectors, i.e, ${\mathcal{F}_{\widetilde{\mathbf{U}}}} = \delta_9\{2,4,8\}$ are {\it indistinguishable} and $\delta_9^2 \equiv \delta_9^8$.  \\
\textit{Improvement in observability}: It is evident from the matrix $\widetilde{\rm{H}}$ that fault vectors in the set $\overline{\mathcal{F}}_{\mathcal{U}} = \delta_9\{1,3,7\}$ are undetectable for control inputs $\widehat{\mathcal{U}}$. It is required to improve the observability of the network to detect these faults.
Therefore from Algorithm 1, a reporter can be designed as follows: 
\begin{align*}
	\text{At node $x_1^1$ : } {\mathcal{F}_{\mathbf{U}_4}^{11}} &= \delta_9\{1,3\}, \\
	\text{at node $x_2^2$ : } {\mathcal{F}_{\mathbf{U}_2}^{22}} &= \delta_9\{1,7\}, 
\end{align*}
where $\mathbf{U}_2 = \delta^2_4$ and $\mathbf{U}_4 = \delta^4_4$. Therefore, node $x^1_1$ can be selected as primary reporter. Now, the set of undetectable faults is reduced to 
$\overline{\mathcal{F}}_{\mathcal{U}} = \delta_9^7$. Selecting node $x^2_2$ as secondary reporter guarantees detection of all faults. However, for constrained input set $\widehat{\mathcal{U}} = \delta_4\{1,3\}$, observability cannot be improved.\\ 
 \textit{Controllable faults:} The faults for which drugs are available are called controllable faults. Assume a set of input vectors as $\widehat{\mathcal{U}} = \delta_4\{1,3\}$. Assume the detected faults using fault estimation method as  $\mathbf{f}_1 = \delta_3^2$ (`stuck-at 0') and $\mathbf{f}_2 = \delta_3^3$ (no fault), i.e., the fault vector $\widetilde{\mathbf{F}} = \delta_9^6$. For an input vector $\widetilde{\mathbf{U}} = \delta_4^1$ or $\widetilde{\mathbf{U}} = \delta_4^3$:
\begin{align*}
\widetilde{\rm{H}}\widetilde{\mathbf{F}}\widetilde{\mathbf{U}}  &= \delta_4[3,4,1,2].
\end{align*}
For a fault-free network without drugs:
\begin{align*}
\widetilde{\rm{H}}\mathbf{F}_0\widetilde{\mathbf{U}}\mathbf{D}_0 &= \widetilde{\rm{H}}\delta_9^9 \widetilde{\mathbf{U}}\delta_4^4 = \delta_4^4. \\
\text{For $\mathbf{D} = \delta_4^2$ : }
\widetilde{\rm{H}}\widetilde{\mathbf{F}}\widetilde{\mathbf{U}}\mathbf{D} &= \widetilde{\rm{H}}\delta_9^9 \widetilde{\mathbf{U}}\delta_4^4.
\end{align*}
i.e., ${\mathcal{D}_{\widetilde{\mathbf{F}}}} = \{\delta_4^2\}$, or $\mathbf{d}_1 = \delta_2^1$ and $\mathbf{d}_2  = \delta_2^2$. This shows that for the given fault scenario $\widetilde{\mathbf{F}}=\delta_9^6$, only drug $\mathbf{d}_1$ is sufficient for nullifying the effect of the fault vector.\\
\textit{Uncontrollable faults:} The faults for which drugs are not available are called uncontrollable faults. Let detected faults using fault estimation method be  $\mathbf{f}_1 = \delta_3^2$ (`stuck-at 0') and $\mathbf{f}_2 = \delta_3^2$ (`stuck-at 0'). i.e.fault vector $\widetilde{\mathbf{F}} = \delta_9^5$. For input vector $\widetilde{\mathbf{U}} = \delta_4^1$ or $\widetilde{\mathbf{U}} = \delta_4^3$:
\begin{align*}
\widetilde{\rm{H}}\widetilde{\mathbf{F}}\widetilde{\mathbf{U}} &= \delta_4[3,3,1,1] \\
\widetilde{\rm{H}}\delta_9^9 \widetilde{\mathbf{U}}\delta_4^4 &= \delta_4^4 \\
\text{For any $\mathbf{D}$ : }
\widetilde{\rm{H}}\widetilde{\mathbf{F}}\widetilde{\mathbf{U}}\mathbf{D} &\neq \widetilde{\rm{H}}\delta_9^9 \widetilde{\mathbf{U}}\delta_4^4 
\end{align*}
Therefore the effects of the fault cannot be controlled with the available set of drugs.}} \exclose
\end{exmp}
%
\textit{Improvement in controllability}:  For Proposition~\ref{cor1:thm5}, if ${\mathcal{D}_{\widetilde{\mathbf{F}}}} = \emptyset$, then a new drug is required such that it can eliminate the effects of the faults. Let $\mathbf{d}_{\lambda+1}$ be a new drug. Then, 
\begin{equation*}
\hat{\mathbf{D}} =  \mathbf{d}_1\,\mathbf{d}_2\cdots \mathbf{d}_{\lambda+1} = {\mathbf{D\,d}_{\lambda+1}}= \ltimes^{\lambda+1}_{j=1} \mathbf{d}_j
\end{equation*}
Naturally, the effect of faults will be maximum at the downstream protein(s) in the pathways. Although the secondary block does not have any fault or drug, a new drug can be targetted at the primary output of that block. These targets facilitate better possibilities of the drug intervention. Therefore, the best possible target site for the drug is searched from output towards input, level wise. Depending upon the target location of the drug, there are two possibilities. (i) Target point is internal node ($x^i_j$ from primary block) of a BM, and (ii) target point is one of the primary outputs (secondary block) of a BM.  For these two cases, similar to (\ref{eq:PO}), equation of the output can be written as 
\begin{equation}
\mathbf{Y} = \widehat{\rm{H}}\mathbf{U F}\hat{\mathbf{D}}.
\end{equation}  
Details of the derivation process are provided in the \textit{Appendix} (Section~\ref{supsec:control}). Algorithm~\ref{alg:control} shows the  process of improvement in controllability. However, all faults cannot be controlled with inhibitor type drugs. Some activator drugs are also required for improved control.
\begin{algorithm} [!ht]
\caption{Improvement in controllability} \label{alg:control}
\While{${\mathcal{D}_{\widetilde{\mathbf{F}}}} = \emptyset$}
{
Derive $Y$ for chosen case \;
Find control set ${\mathcal{D}_{\widetilde{\mathbf{F}}}}$ using {\text{Proposition}~\ref{cor1:thm5}} \;
\If{${\mathcal{D}_{\widetilde{\mathbf{F}}}} = \emptyset$} 
{
Change target point \;
Repeat all steps \;
}
\If{${\mathcal{D}_{\widetilde{\mathbf{F}}}} \neq \emptyset$ or all target points considered}
{
Stop \;
}
}
\Return{${\mathcal{D}_{\widetilde{\mathbf{F}}}}$} \;
\end{algorithm}

\subsection{Boolean control network}  \label{BCN}
In (\ref{eq:Struc2_FB}), the information about network dynamics is shown by structure matrix $\mv{L}$. As the secondary block is fault-free, a structure matrix $\mv{L}$ alone is sufficient for the fault analysis. The cyclic attractors in the state transitions of the BCN can be calculated from the diagonal of its structure matrix \cite{Cheng2010b}.  The number and sizes of the attractor cycles in the network change with the fault present. For each input vector and fault vector, a structure matrix ${\rm{L}}_\mathbf{UF}$ and its corresponding cycles are estimated. Assume that there are $s$ different attractor cycles in the network. The cycles are numbered in the increasing size of their lengths. The length of an $i^{th}$ cycle is $l_i$. For a cycle of length $k$, the corresponding diagonal elements of the structure matrix $({\rm{L}}_\mathbf{UF})^k$ become one \cite{Cheng2010b}. Therefore, the diagonal values are used to identify changes in the attractor cycles for different conditions of the inputs and faults.

Note that ${\rm{A}}\vartriangle {\rm{B}}$ is the symmetric difference between the matrices ${\rm{A}}$ and ${\rm{B}}$. $Tr({\rm{A}})$ is the trace of a matrix ${\rm{A}}$.
\begin{thm}:  \label{theorem:7.1}
\textit{Existence theorem}\\
For a given input $\widetilde{\mathbf{U}} \in \widehat{\mathcal{U}}$ in BCN $\rm{L}$ (\ref{eq:Struc2_FB}),  the existence of a fault vector $\mathbf{F}_i$ is ensured iff
\begin{equation*}
Tr  \Big( ({\rm{L}}_{\widetilde{\mathbf{U}} \mathbf{F}_i})^k \vartriangle ({\rm{L}}_{\widetilde{\mathbf{U}} \mathbf{F}_0})^k \Big)  \ne 0 \text{ for any } k \in \{l_1,l_2,\hdots,l_s\},
\end{equation*} 
where ${\rm{L}}_{\widetilde{\mathbf{U}} \mathbf{F}_i}$ indicates resultant structure matrix $\rm{L}$ in the presence of input $\widetilde{\mathbf{U}}$ and fault $\mathbf{F}_i$, $l_p$ is the length of a cycle in BCN ${\rm{L}}_\mathbf{UF}$\big ($l_p < l_{p+1} ; \forall p \in \{1,\hdots,s\}$\big), and $l_s$ is the length of a largest cycle among all ${\rm{L}}_\mathbf{UF}$. 
\end{thm}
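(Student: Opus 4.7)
The plan is to mirror the two-direction structure of Theorem 1.1 (for the BM), but replace the static output comparison by a comparison of attractor cycles in the reduced BCN dynamics. First I would fix the experimentally applied input $\widetilde{\mathbf{U}}$ and the null drug $\mathbf{D}_0$, obtaining the autonomous iteration $\mathbf{X}(\tau) = {\rm{L}}_{\widetilde{\mathbf{U}} \mathbf{F}} \mathbf{X}(\tau-1)$, where ${\rm{L}}_{\widetilde{\mathbf{U}} \mathbf{F}}$ is a logical (each column some $\delta^j_{2^N}$) transition matrix whose only fault-dependent observable signature at steady state is the collection of attractor cycles. Thus the statement ``fault $\mathbf{F}_i$ exists'' must be read as: its cycle structure differs from that of ${\rm{L}}_{\widetilde{\mathbf{U}} \mathbf{F}_0}$, since this is what makes $\mathbf{F}_i$ distinguishable from $\mathbf{F}_0$.

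The key algebraic lever is the identity $[({\rm{L}}_{\widetilde{\mathbf{U}} \mathbf{F}})^k]_{jj} = 1$ iff the state $\delta^j_{2^N}$ lies on a cycle whose length divides $k$, which the paper already invokes for attractor enumeration. Combined with the fact that for two $\{0,1\}$-matrices $A \vartriangle B$ is entrywise nonzero exactly where $A$ and $B$ disagree, this yields the clean reformulation: $Tr\big(({\rm{L}}_{\widetilde{\mathbf{U}} \mathbf{F}_i})^k \vartriangle ({\rm{L}}_{\widetilde{\mathbf{U}} \mathbf{F}_0})^k\big) \ne 0$ is equivalent to the existence of a state whose membership in a cycle of length dividing $k$ changes between the two fault scenarios.

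For necessity I would argue the contrapositive. If the displayed trace vanishes for every $k \in \{l_1, \ldots, l_s\}$, then for each cycle length actually occurring in either ${\rm{L}}_{\widetilde{\mathbf{U}} \mathbf{F}_i}$ or ${\rm{L}}_{\widetilde{\mathbf{U}} \mathbf{F}_0}$ the diagonals of the corresponding power coincide. Every periodic state then has identical period in both systems, so the attractor sets agree and $\mathbf{F}_i$ is observationally indistinguishable from $\mathbf{F}_0$, contradicting the hypothesized existence. For sufficiency, a nonzero trace at some admissible $k$ exhibits an explicit state that is $k$-periodic under exactly one of the two systems, producing an observable trajectory mismatch and hence $\mathbf{F}_i \ne \mathbf{F}_0$ with a detectable effect.

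The main obstacle I anticipate is not any single calculation but the book-keeping around the finite index set $\{l_1, \ldots, l_s\}$: one must justify that sweeping $k$ only over the cycle lengths occurring in the family $\{{\rm{L}}_{\mathbf{U} \mathbf{F}}\}$ is both necessary and sufficient, i.e.\ that no discrepancy can first appear at some $k$ outside this set. The justification is that any differing cycle has its own length $l_p$ already included in the set by construction, and at $k = l_p$ the relevant diagonal entry necessarily disagrees between the two matrices. Once this point is settled, the remainder reduces to routine manipulation of diagonal entries, entirely parallel to the output-comparison argument used for the BM.
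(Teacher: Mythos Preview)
Your proposal is correct and follows essentially the same approach as the paper: reduce to the autonomous dynamics under $\widetilde{\mathbf{U}}$ and $\mathbf{D}_0$, use the diagonal-of-power characterization of attractor states, argue necessity via the contrapositive (equal diagonals for all admissible $k$ force indistinguishability of $\mathbf{F}_i$ from $\mathbf{F}_0$), and obtain sufficiency trivially from a nonzero trace exhibiting a cycle mismatch. Your explicit justification that restricting $k$ to the occurring cycle lengths $\{l_1,\ldots,l_s\}$ loses nothing is a point the paper leaves implicit, but otherwise the arguments coincide.
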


\begin{proof}:
\textit{Necessary condition:} For the specified input $\widetilde{\mathbf{U}} \in \widehat{\mathcal{U}}$, drug $\mathbf{D}_0$ and the arbitrary fault $\mathbf{F}_i$ in the BCN $\rm{L}$, the state dynamics of the system is given by: 
\begin{eqnarray*}
	   \mathbf{X}(\tau) = {\rm{L}}\widetilde{\mathbf{U}}\mathbf{F}_i{\mathbf{D}_0}\mathbf{X}({\tau-1}) ={\rm{L}}_{\widetilde{\mathbf{U}} \mathbf{F}_i} \mathbf{X}({\tau-1}),
\end{eqnarray*}
where ${\rm{L}}_{\widetilde{\mathbf{U}} \mathbf{F}_i} = {\rm{L W}}_{[2^\lambda,2^\alpha 3^\gamma]}\mathbf{D}_0\widetilde{\mathbf{U}}\mathbf{F}_i$. $\rm{L}_{\widetilde{\mathbf{U}} \mathbf{F}_i}$ is a matrix of dimension $2^N \times 2^N$, where $N$ is the number of feedback nodes in the BCN. If the system $\rm{L}$ is fault-free, the state equation is 
\begin{eqnarray*}
	   \mathbf{X}(\tau) = {\rm{L}}\widetilde{\mathbf{U}}\mathbf{F}_0{\mathbf{D}_0}\mathbf{X}({\tau-1}) ={\rm{L}}_{\widetilde{\mathbf{U}} \mathbf{F}_0} \mathbf{X}({\tau-1}),
\end{eqnarray*}
where all input conditions are kept identical. 

${\rm{L}}_{\widetilde{\mathbf{U}} \mathbf{F}_i}$ and ${\rm{L}}_{\widetilde{\mathbf{U}} \mathbf{F}_0}$ represent state transition matrices of the network.  The diagonal of matrix   $({\rm{L}}_{\widetilde{\mathbf{U}} \mathbf{F}_i})^k$ represents states involved in cycles of length $k$ and its proper factors. Therefore estimation of $Tr  \Big( ({\rm{L}}_{\widetilde{\mathbf{U}} \mathbf{F}_i})^k \vartriangle ({\rm{L}}_{\widetilde{\mathbf{U}} \mathbf{F}_0})^k \Big)$ for increasing $k \in \{l_1,l_2,\hdots,l_s\}$ shows mismatch in the cycles. If $Tr  \Big( ({\rm{L}}_{\widetilde{\mathbf{U}} \mathbf{F}_i})^k \vartriangle ({\rm{L}}_{\widetilde{\mathbf{U}} \mathbf{F}_0})^k \Big) = 0$, difference between cycles cannot be estimated, and faults $\mathbf{F}_i$ and $\mathbf{F}_0$ become indistinguishable. If any value of $k$ makes $Tr  \Big( ({\rm{L}}_{\widetilde{\mathbf{U}} \mathbf{F}_i})^k \vartriangle ({\rm{L}}_{\widetilde{\mathbf{U}} \mathbf{F}_0})^k \Big) \neq 0$, fault $\mathbf{F}_i$ becomes distinguishable. Therefore, the necessary condition for existence of fault $\mathbf{F}_i$ is $Tr  \Big( ({\rm{L}}_{\widetilde{\mathbf{U}} \mathbf{F}_i})^k \vartriangle ({\rm{L}}_{\widetilde{\mathbf{U}} \mathbf{F}_0})^k \Big) \neq 0  \text{ for any } k \in \{l_1,l_2,\hdots,l_s\}$.\\
{\it Sufficient condition:}
Suppose, 
\begin{align*}
T_r  \Big( ({\rm{L}}_{\widetilde{\mathbf{U}} \mathbf{F}_i})^k \vartriangle ({\rm{L}}_{\widetilde{\mathbf{U}} \mathbf{F}_0})^k \Big) \neq 0  \text{ for any } k \in \{l_1,l_2,\hdots,l_s\},
\end{align*}
where ${\rm{L}}_{\widetilde{\mathbf{U}} \mathbf{F}_i} = {\rm{L W}}_{[2^\lambda,2^\alpha 3^\gamma]}\mathbf{D}_0\widetilde{\mathbf{U}}\mathbf{F}_i$. The above inequality states that attractor cycles produced in presence of fault $\mathbf{F}_i$ and that of fault-free network are not same. Therefore it trivially establishes $\mathbf{F}_i \neq \mathbf{F}_0$. Hence it is sufficient to say that the inequality guaranties the existence of fault $\mathbf{F}_i$.
This completes the proof.
\end{proof}

\begin{exmp}:
Let number of feedback nodes ($N$) inside the network be 3. Therefore, dimensions of matrix  ${\rm{L}}_{\widetilde{\mathbf{U}} \mathbf{F}_i}$ or ${\rm{L}}_{\widetilde{\mathbf{U}} \mathbf{F}_0}$ become $8 \times 8$. Assume that for some network, known input condition $\widetilde{\mathbf{U}}$ and fault $\mathbf{F}_i$, the structure matrix is reduced to
\begin{align*}
{\rm{L}}_{\widetilde{\mathbf{U}} \mathbf{F}_i} &= \delta_8[3,4,6,1,1,6,8,7] \\
\text{ and } {\rm{L}}_{\widetilde{\mathbf{U}} \mathbf{F}_0} &= \delta_8[5,4,6,3,1,6,2,3]
\end{align*}
For $k = 1$, $Tr  \Big( {\rm{L}}_{\widetilde{\mathbf{U}} \mathbf{F}_i} \vartriangle {\rm{L}}_{\widetilde{\mathbf{U}} \mathbf{F}_0} \Big) = 0$. \\
For $k = 2$, $Tr  \Big( ({\rm{L}}_{\widetilde{\mathbf{U}} \mathbf{F}_i})^2 \vartriangle ({\rm{L}}_{\widetilde{\mathbf{U}} \mathbf{F}_0})^2 \Big) \neq 0$. This ensures existence of fault $\mathbf{F}_i$. \exclose
\end{exmp}
\begin{cor}: \label{theorem:7.2}
For a given fault  $\widetilde{\mathbf{F}} \in \widehat{\mathcal{F}}$ in BCN $\rm{L}$ (\ref{eq:Struc2_FB}),  there exists an input $\mathbf{U}_j \in \widehat{\mathcal{U}}$ capable of detecting fault $\widetilde{\mathbf{F}}$ iff 
\begin{equation*}
Tr  \Big( ({\rm{L}}_{\widetilde{\mathbf{F}} \mathbf{U}_j})^k \vartriangle ({\rm{L}}_{\mathbf{F}_0 \mathbf{U}_j})^k \Big)  \ne 0 \text{ for any } k \in \{l_1,l_2,\hdots,l_s\}, \\
\end{equation*} 
where ${\rm{L}}_{\widetilde{\mathbf{F}} \mathbf{U}_j}$ indicates resultant structure matrix $\rm{L}$ in presence of input $\mathbf{U}_j$, fault $\widetilde{\mathbf{F}}$, $l_p$ is the length of a cycle in BCN ${\rm{L}}_\mathbf{FU}$\big ($l_p < l_{p+1} ; \forall p \in \{1,\hdots,s\}$\big), and $l_s$ is the size of a largest cycle among all  ${\rm{L}}_\mathbf{FU}$.
\end{cor}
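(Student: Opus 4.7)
The plan is to mimic the proof of Theorem 7.1 almost verbatim, exploiting the symmetry of the semi-tensor product formulation between the roles of the input vector and the fault vector. In Theorem 7.1 the fault was varied against a fixed input; here the input is varied against a fixed fault, but the underlying comparison is still between the state-transition matrix of the perturbed BCN and that of the fault-free BCN, so the trace-of-symmetric-difference diagnostic carries over unchanged.

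First I would set up the necessary direction. Fix $\widetilde{\mathbf{F}} \in \widehat{\mathcal{F}}$, the null-drug $\mathbf{D}_0$, and an arbitrary candidate input $\mathbf{U}_j \in \widehat{\mathcal{U}}$. Starting from the state equation in (\ref{eq:Struc2_FB}), I would apply the swap-matrix identities to reorder the factors ${\rm{L}}\mathbf{U}_j\widetilde{\mathbf{F}}\mathbf{D}_0\mathbf{X}(\tau-1)$ into the form ${\rm{L}}_{\widetilde{\mathbf{F}}\mathbf{U}_j}\mathbf{X}(\tau-1)$, where ${\rm{L}}_{\widetilde{\mathbf{F}}\mathbf{U}_j}$ is the $2^N\times 2^N$ reduced transition matrix absorbing $\widetilde{\mathbf{F}}$, $\mathbf{D}_0$ and $\mathbf{U}_j$. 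Repeating the same reduction with $\mathbf{F}_0$ in place of $\widetilde{\mathbf{F}}$ yields ${\rm{L}}_{\mathbf{F}_0\mathbf{U}_j}$. I would then invoke the fact (used in Theorem 7.1 and from \cite{Cheng2010b}) that the diagonal entries of $({\rm{L}}_{\star})^k$ encode the states lying on attractor cycles whose length divides $k$. Hence the symmetric difference of these two diagonals, scanned over $k \in \{l_1,\ldots,l_s\}$, detects every discrepancy between the cyclic attractor structures of the perturbed and fault-free systems. If $Tr\bigl(({\rm{L}}_{\widetilde{\mathbf{F}}\mathbf{U}_j})^k \vartriangle ({\rm{L}}_{\mathbf{F}_0\mathbf{U}_j})^k\bigr) = 0$ for every such $k$, the two systems share identical attractor sets for input $\mathbf{U}_j$, and $\widetilde{\mathbf{F}}$ is indistinguishable from $\mathbf{F}_0$ under that input; so detection requires the trace to be nonzero for at least one $k$.

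For the sufficient direction, I would argue directly that if the displayed inequality holds for some $k$, then ${\rm{L}}_{\widetilde{\mathbf{F}}\mathbf{U}_j} \neq {\rm{L}}_{\mathbf{F}_0\mathbf{U}_j}$ (since identical transition matrices would give identical traces of all powers), so the steady-state behavior of the BCN under $\mathbf{U}_j$ differs between the faulty and fault-free cases, which is precisely what it means for $\mathbf{U}_j$ to detect $\widetilde{\mathbf{F}}$. This closes the equivalence.

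The main obstacle, as in Theorem 7.1, is not the logical skeleton but verifying that restricting $k$ to the finite set $\{l_1,\ldots,l_s\}$ is genuinely sufficient, i.e.\ that no attractor discrepancy is missed at lengths outside this list. I would justify this by noting that $l_s$ upper-bounds the length of every attractor cycle in both ${\rm{L}}_{\widetilde{\mathbf{F}}\mathbf{U}_j}$ and ${\rm{L}}_{\mathbf{F}_0\mathbf{U}_j}$ (by construction of $l_s$ as the largest cycle length across all ${\rm{L}}_{\mathbf{F}\mathbf{U}}$), so any differing cycle must have length equal to some $l_p$, and its signature appears in $Tr(({\rm{L}}_{\star})^{l_p})$. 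Beyond this verification, the remainder of the argument is a routine transcription of the Theorem 7.1 proof with the roles of $\mathbf{U}$ and $\mathbf{F}$ interchanged.
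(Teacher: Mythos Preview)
Your proposal is correct and matches the paper's intended approach: the paper explicitly omits the proof of this corollary, stating that it ``can be given following the same arguments'' as Theorem~\ref{theorem:7.1}, and your argument is precisely that transcription with the roles of $\mathbf{U}$ and $\mathbf{F}$ interchanged. The extra justification you supply for why the finite list $\{l_1,\ldots,l_s\}$ suffices is a helpful elaboration but does not depart from the paper's line of reasoning.
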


Assume that the cardinality of the permissible input set $\whsv{U}$ is $\kappa$. A sequence $\pi$ of inputs is a subset of $\whsv{U}$. Therefore, $card(\pi) \leq  \kappa$. The detection of the fault and design of intervention is more reliable when multiple input vectors are available. Hence, the fault analysis and intervention procedures are derived from the sequence of inputs.
 
\begin{cor}:  \label{cor:thm7.2}
For a given fault  $\widetilde{\mathbf{F}} \in \widehat{\mathcal{F}}$ in BCN $\rm{L}$ (\ref{eq:Struc2_FB}),  there exists a sequence $\pi$ ($card(\pi) \leq  \kappa$) of inputs $\mathbf{U}_j \in \widehat{\mathcal{U}} \,\,(j = 1,\hdots,\kappa)$ capable of detecting fault $\widetilde{\mathbf{F}}$ iff  
\begin{equation*}
Tr  \Bigg(  \underset{j=1}{\stackrel{\kappa}{\prod}} {\rm{L}}_{\mathbf{U}_j \widetilde{\mathbf{F}}} \vartriangle \underset{j=1}{\stackrel{\kappa}{\prod}} {\rm{L}}_{\mathbf{U}_j \mathbf{F}_0} \Bigg)  \ne 0,
\end{equation*} 
where ${\rm{L}}_{\mathbf{U}_j \widetilde{\mathbf{F}}}$ indicates the resultant structure matrix $\rm{L}$ in presence of input $\mathbf{U}_j$ and fault $\widetilde{\mathbf{F}}$. Note that the sequence of size $card(\pi)$ is sufficient to detect the fault, and full sequence $\{\mathbf{U}_j, j =1,\hdots, \kappa\}$ may not be required. 
\end{cor}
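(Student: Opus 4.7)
The plan is to mimic the proof strategy of Theorem~\ref{theorem:7.1} but with the single input replaced by an ordered sequence of admissible inputs acting on the BCN. First, I would interpret the sequence $\pi = (\mathbf{U}_1, \ldots, \mathbf{U}_\kappa)$ as applying $\mathbf{U}_j$ at time step $j$. Under fault $\widetilde{\mathbf{F}}$ (and null drug $\mathbf{D}_0$), the state recursion becomes $\mathbf{X}(j) = {\rm{L}}_{\mathbf{U}_j \widetilde{\mathbf{F}}} \mathbf{X}(j-1)$, so the composite single-sequence transition matrix is $M_{\widetilde{\mathbf{F}}} := \prod_{j=1}^{\kappa} {\rm{L}}_{\mathbf{U}_j \widetilde{\mathbf{F}}}$; define $M_{\mathbf{F}_0}$ analogously. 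Both are $2^N \times 2^N$ binary structure matrices whose columns encode the state reached from each initial state after one full pass of the sequence. This reduction puts us in exactly the algebraic setting of Theorem~\ref{theorem:7.1}, with the role of the fault-modified one-step operator now played by the composite operator over a full application of $\pi$.

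For the necessary direction, I would assume that $\widetilde{\mathbf{F}}$ is detectable under $\pi$. Then the state trajectories produced by the faulty and fault-free composite operators must differ on at least one admissible initial state. Arguing as in the necessity portion of Theorem~\ref{theorem:7.1}, this trajectory mismatch manifests as a diagonal mismatch of the two composite operators (or their appropriate powers corresponding to cycle lengths in the sequence-driven dynamics), which is captured exactly by $Tr(M_{\widetilde{\mathbf{F}}} \vartriangle M_{\mathbf{F}_0}) \neq 0$. Conversely, a nonzero trace of the symmetric difference implies $M_{\widetilde{\mathbf{F}}} \neq M_{\mathbf{F}_0}$, from which there trivially exists an initial state whose sequence-driven trajectory differs between the faulty and fault-free networks, establishing detectability. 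Finally, since $\pi$ is drawn from the permissible input set $\whsv{U}$ of cardinality $\kappa$, the bound $card(\pi) \leq \kappa$ is immediate, and any shorter distinguishing sub-sequence is equally admissible.

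The main obstacle I expect is justifying that the diagonal-only trace test on the single product $M_{\widetilde{\mathbf{F}}} \vartriangle M_{\mathbf{F}_0}$ adequately captures \emph{all} distinguishing attractor behaviour that the sequence induces. In Theorem~\ref{theorem:7.1} this is handled by sweeping $k \in \{l_1, \ldots, l_s\}$ to cover every possible cycle length; here one must interpret the sequence as repeated (giving the composite one-step operator $M$ a cyclic semantics) so that fixed points of $M$ correspond to attractors of the $\pi$-driven dynamics and the trace condition becomes both necessary and sufficient. Once this interpretation is pinned down, the symmetric-difference identity $Tr(A \vartriangle B) \neq 0 \iff A \neq B$ on the relevant diagonal entries closes both directions, and the corollary follows as a direct lift of Corollary~\ref{theorem:7.2} from single-input to sequence-of-inputs action.
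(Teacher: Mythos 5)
The paper never actually proves this corollary --- it is one of the results whose proofs are ``omitted as those proofs can be given following the same arguments'' --- so the relevant comparison is with the intended lift of the BCN existence theorem (Theorem~\ref{theorem:7.1}). Your overall strategy, collapsing the input sequence into the composite transition operator $M_{\widetilde{\mathbf{F}}} = \prod_{j} {\rm{L}}_{\mathbf{U}_j \widetilde{\mathbf{F}}}$ and re-running the existence argument on it, is exactly the reduction the authors have in mind, and your sufficiency direction (nonzero trace $\Rightarrow$ the composite operators differ $\Rightarrow$ the fault is not the null fault) is at the same level of rigour as the paper's own sufficiency argument for Theorem~\ref{theorem:7.1}.

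The gap is in the necessity direction, and you half-see it but then close it with a false identity. You invoke ``$Tr(A \vartriangle B) \neq 0 \iff A \neq B$ on the relevant diagonal entries''; in fact $Tr(A \vartriangle B) \neq 0$ only says that the \emph{diagonals} of $A$ and $B$ differ, which is strictly weaker than $A \neq B$. Two distinct structure matrices can have identical (even all-zero) diagonals --- for instance $\delta_3[2,3,1]$ and $\delta_3[3,1,2]$ --- so detectability (trajectories or attractors differing under the sequence $\pi$) does not force a diagonal mismatch of the bare products. Theorem~\ref{theorem:7.1} avoids precisely this trap by sweeping $k$ over all cycle lengths $\{l_1,\hdots,l_s\}$, so that every periodic state eventually appears on a diagonal; the corollary's test uses only the first power of the composite operator, i.e., only states fixed by one full pass of $\pi$. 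To make the ``only if'' direction go through you must either (i) restore the power sweep and test $Tr\big((M_{\widetilde{\mathbf{F}}})^k \vartriangle (M_{\mathbf{F}_0})^k\big)$ over the cycle lengths of the $\pi$-periodically driven dynamics, or (ii) explicitly define ``detecting by a sequence'' to mean that the fixed-point sets of the two composite operators differ, in which case the statement is true by definition rather than by the argument of Theorem~\ref{theorem:7.1}. As written, your proof asserts the equivalence without securing either option, and the acknowledged ``main obstacle'' in your last paragraph is never actually overcome.
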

\begin{thm}:  \label{theorem:8}
\textit{Uniqueness theorem}\\
For a specific fault $\widetilde{\mathbf{F}} \in \widehat{\mathcal{F}}$ in BCN $\rm{L}$ (\ref{eq:Struc2_FB}), input $\widetilde{\mathbf{U}} \in \widehat{\mathcal{U}}$ can uniquely identify fault $\widetilde{\mathbf{F}}$ iff 
\begin{align*}
& Tr  \Big( ({\rm{L}}_{\widetilde{\mathbf{U}} \widetilde{\mathbf{F}}})^k \vartriangle ({\rm{L}}_{\widetilde{\mathbf{U}} \mathbf{F}_0})^k \Big)  \ne 0 \text{ for any } k \in \{l_1,\hdots,l_s\}, \\
\text{ and } & Tr  \Big( ({\rm{L}}_{\widetilde{\mathbf{U}} \mathbf{F}_j})^k \vartriangle ({\rm{L}}_{\widetilde{\mathbf{U}} \mathbf{F}_0})^k \Big)  = 0; \quad \forall k \in \{l_1,l_2,\hdots,l_s\}; \\
&\hspace{1.87in}\forall \mathbf{F}_j \ne \widetilde{\mathbf{F}}, \mathbf{F}_j \in \widehat{\mathcal{F}}.
\end{align*}
\end{thm}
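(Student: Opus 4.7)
The plan is to prove Theorem 8 by essentially applying Theorem 7.1 (the existence theorem for BCN) twice in tandem: once affirmatively to the candidate fault $\widetilde{\mathbf{F}}$, and once in its contrapositive form to every other fault in $\widehat{\mathcal{F}}$. This parallels the BM uniqueness theorem (Theorem 2), where the first inequality certifies detectability of the target fault and the second equality ensures that every competitor looks exactly like the no-fault baseline under $\widetilde{\mathbf{U}}$.

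For the necessary direction, I would assume that $\widetilde{\mathbf{U}}$ uniquely identifies $\widetilde{\mathbf{F}}$. First, unique identification certainly implies that $\widetilde{\mathbf{F}}$ is detected, so invoking Theorem 7.1 directly yields the first trace inequality for some $k \in \{l_1,\ldots,l_s\}$. Second, I would argue that no other $\mathbf{F}_j \in \widehat{\mathcal{F}}$ with $\mathbf{F}_j \neq \widetilde{\mathbf{F}}$ can be detectable under $\widetilde{\mathbf{U}}$: if it were, then observing a cycle-structure mismatch under $\widetilde{\mathbf{U}}$ could be attributed to either $\widetilde{\mathbf{F}}$ or $\mathbf{F}_j$, defeating unique identification. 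Applying the contrapositive of Theorem 7.1 to each such $\mathbf{F}_j$ therefore gives $Tr\bigl(({\rm{L}}_{\widetilde{\mathbf{U}}\mathbf{F}_j})^k \vartriangle ({\rm{L}}_{\widetilde{\mathbf{U}}\mathbf{F}_0})^k\bigr) = 0$ for every $k$ in the specified set, which is the second condition.

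For the sufficient direction, I would assume both trace conditions hold and show that $\widetilde{\mathbf{U}}$ distinguishes $\widetilde{\mathbf{F}}$ from every other permissible fault. The first condition, by Theorem 7.1, certifies that the attractor-cycle structure under $\widetilde{\mathbf{F}}$ is measurably different from that under $\mathbf{F}_0$ for some cycle length $k$. The second condition says that for every competing $\mathbf{F}_j$, the cycle structure agrees with the fault-free baseline at every $k \in \{l_1,\ldots,l_s\}$. Consequently, $\widetilde{\mathbf{F}}$ is the only element of $\widehat{\mathcal{F}}$ whose induced state-transition matrix ${\rm{L}}_{\widetilde{\mathbf{U}}\mathbf{F}}$ produces a nontrivial symmetric difference in powers with ${\rm{L}}_{\widetilde{\mathbf{U}}\mathbf{F}_0}$, so whenever a deviation is observed, $\widetilde{\mathbf{F}}$ is the unique fault that can account for it.

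The main obstacle I anticipate is not an algebraic one but a modeling one: ensuring that the notion of ``identifying'' a fault in a BCN truly coincides with distinguishing its induced attractor-cycle pattern from the fault-free one across all relevant lengths $k$. One must argue that the trace-of-symmetric-difference test over $\{l_1,\ldots,l_s\}$ captures every observable difference in long-run behavior — that is, if two fault-induced transition matrices agree on $Tr(({\rm{L}}_\cdot)^k)$ for every such $k$, then their cycle structures are operationally indistinguishable from what the experimenter can read off. This hinges on the interpretation established in the proof of Theorem 7.1, where $Tr(({\rm{L}}_{\mathbf{UF}})^k)$ counts the states lying on cycles whose length divides $k$; once this dictionary is accepted, both directions of the uniqueness argument reduce to straightforward set-theoretic bookkeeping on $\widehat{\mathcal{F}}$.
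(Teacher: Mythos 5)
Your proposal is correct and matches the paper's intent: the paper explicitly omits this proof (its Remark states that only the two existence theorems are proved and the rest ``can be given following the same arguments''), and your reconstruction --- applying Theorem~\ref{theorem:7.1} once affirmatively to $\widetilde{\mathbf{F}}$ and once in contrapositive to every competing $\mathbf{F}_j \in \widehat{\mathcal{F}}$, exactly mirroring the structure of the BM uniqueness result in Theorem~\ref{theorem:2} --- is precisely that intended argument. Your closing caveat about the trace-of-symmetric-difference test being the operative notion of ``observable difference'' is the right thing to lean on, and it is already established in the paper's proof of Theorem~\ref{theorem:7.1}.
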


\begin{thm}:  \label{theorem:11}
\textit{Existence of control}\\
For BCN of (\ref{eq:Struc2_FB}) with fault vector $\widetilde{\mathbf{F}} \in {\mathcal{F}}$ and sequence $\pi$ of control inputs, a drug vector $\mathbf{D}_i \in {\mathcal{D}}$ exists if
\begin{equation*}
Tr  \Bigg(  \underset{j=card(\pi)}{\stackrel{1}{\prod}} {\rm{L}}_{\widetilde{\mathbf{F}}} \mathbf{U}_j \mathbf{D}_i \vartriangle \underset{j=card(\pi)}{\stackrel{1}{\prod}} {\rm{L}}_{\mathbf{F}_0} \mathbf{U}_j \mathbf{D}_0 \Bigg)  = 0,
\end{equation*}
where ${\rm{L}}_{\widetilde{\mathbf{F}}}$ indicates the resultant structure matrix ${\rm{L}}$ in presence of fault $\widetilde{\mathbf{F}}$.
\end{thm}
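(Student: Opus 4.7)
The plan is to adapt the proof of Theorem~\ref{theorem:7.1}, shifting from fault detection to drug qualification. Starting from the BCN representation in (\ref{eq:Struc2_FB}) with fault $\widetilde{\mathbf{F}}$ fixed and drug vector $\mathbf{D}_i$ applied, I rewrite the per-step state equation as $\mathbf{X}(\tau) = {\rm L}_{\widetilde{\mathbf{F}}}\,\mathbf{U}(\tau)\,\mathbf{D}_i\,\mathbf{X}(\tau-1)$, using the same swap-matrix manipulations that brought ${\rm L}_{\widetilde{\mathbf{U}}\mathbf{F}_i}$ into its reduced form earlier. Composing along the input sequence $\pi=(\mathbf{U}_1,\ldots,\mathbf{U}_{card(\pi)})$ yields the overall transition operator $\prod_{j=card(\pi)}^{1}{\rm L}_{\widetilde{\mathbf{F}}}\,\mathbf{U}_j\,\mathbf{D}_i$, where the reverse indexing arises because each successive step left-multiplies its predecessor. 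The same construction for the healthy, untreated reference produces $\prod_{j=card(\pi)}^{1}{\rm L}_{\mathbf{F}_0}\,\mathbf{U}_j\,\mathbf{D}_0$.

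Next I invoke the attractor-cycle reading of the diagonal used in Theorem~\ref{theorem:7.1}: the diagonal of a composite transition matrix records exactly the states that lie on cycles of the composed dynamics, which is the experimentally observable signature of the network. A drug qualifies as an intervention when the treated faulty system and the healthy reference exhibit identical cyclic behaviour. Because the symmetric difference $\mathrm{A}\vartriangle \mathrm{B}$ of two logical matrices isolates the entries where they disagree, the condition $Tr(\mathrm{A}\vartriangle \mathrm{B})=0$ is equivalent to perfect diagonal agreement, i.e.\ no detectable cyclic mismatch between the two trajectories. Hence exhibiting any $\mathbf{D}_i$ that makes the stated trace vanish certifies the existence of a valid control, giving the desired sufficient condition.

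The main obstacle I anticipate is twofold. First, careful bookkeeping of the reversed product index and the repeated swap operations that justify the compact notation ${\rm L}_{\widetilde{\mathbf{F}}}\,\mathbf{U}_j\,\mathbf{D}_i$ (obtained by moving $\widetilde{\mathbf{F}}$ through Kronecker factors); this needs to be spelled out once and then absorbed into the subscript notation, exactly as was done when $\widetilde{\mathrm{H}}$ was introduced for the BM case. Second, and more substantively, justifying that a \emph{single} composite trace captures full cyclic equivalence, rather than a family of traces indexed by cycle length as in Theorem~\ref{theorem:7.1}. I plan to handle this by arguing that when $card(\pi)$ is taken at least as large as $l_s$, the largest relevant cycle length, all periodicities of both networks are witnessed on the diagonal of the one composite product, making the single trace condition both meaningful and sufficient to declare $\mathbf{D}_i$ a valid controller for the fault vector $\widetilde{\mathbf{F}}$.
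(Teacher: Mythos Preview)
Your proposal is aligned with the paper, which in fact omits the proof of this theorem entirely, stating that it ``can be given following the same arguments'' as Theorems~\ref{theorem:1.1} and~\ref{theorem:7.1}; your plan does exactly that, transporting the swap-matrix reduction and diagonal/trace reading of Theorem~\ref{theorem:7.1} to the drug-versus-healthy comparison. One remark: your second anticipated obstacle is less severe than you suggest, since the theorem is stated only as a sufficient condition (``exists if'', not ``iff''), and Corollary~\ref{cor:thm7.2} already adopts the same single composite-product trace without any $k$-indexed family or lower bound on $card(\pi)$, so no extra argument about $card(\pi)\ge l_s$ is needed to match the paper's level of rigor.
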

%
%
Other result like the generalized uniqueness, improvement in observability, and improvement in controllability can be easily modified for BCN. Thus, those results for BCN are not shown in this manuscript. 

\begin{exmp}:
{\rm{
Boolean equations for p53 pathways are given as \cite{layek2011b} : 
\begin{align*}
ATM_{next} &= \overline{Wip1} \big(ATM + DNA\_dsb \big), \\
p53_{next} &= \overline{Mdm2} \big(ATM + Wip1 \big), \\
Wip1_{next} &= p53, \\
Mdm2_{next} &= \overline{ATM}\big(p53 + Wip1\big).
\end{align*}
In these pathways, when p53 is `active', it acts as a tumor suppressor (plausible fault location $f_1$) and Mdm2 is one of the target sites for application of drugs ($d_1$). Therefore, $\alpha = 1$, $\lambda = 1$, $\gamma = 1$, and $N = 4$. Structure matrix $\rm{L}$ for the above equations is estimated as shown.
\begin{align*}
{\rm{L}} &= \delta_{16}\big[10,10,2,2,10,10,2,2,9,9,5,5,9,9,5,5,| 14,10, \\
&\hspace{1cm} 6,2,14,10,6,2,13,9,5,5,13, 9,5,5,| 12,12,4,4,\\
&\hspace{1cm} 12,12,4,4,11,11,8,8,11,11,8,8, | 16,12,8,4,16,\\ 
&\hspace{1cm} 12,8,4,15,11,8,8,15, 11,8,8, | 10,10,2,2,12,12,\\
&\hspace{1cm} 4,4,9,9,5,5,11,11,8,8, | 14,10,6,2,16,12,8,4, \\
&\hspace{1cm} 13,9,5,5,15,11,8,8, | 10,10,2,2,10,10,2,2,9,9, \\
&\hspace{1cm} 13,13,9,9,13,13, | 14,10,6,2,14,10,6,2,13,9,13,\\
&\hspace{1cm} 13,13, 9,13,13, | 12,12,4,4,12,12,4,4,11,11,16, \\
&\hspace{1cm} 16,11,11,16,16, | 16,12,8,4,16,12,8,4,15,11,16, \\
&\hspace{1cm} 16,15,11,16,16, | 10,10,2,2,12,12,4,4,9,9,13,\\
&\hspace{1cm} 13,11,11,16,16, | 14,10,6,2,16,12,8,4,13,9,13,\\
&\hspace{1cm} 13,15,11,16,16 \big].
\end{align*}
p53 having `stuck-at 0' results in proliferation. Therefore, it is important to detect this fault. \\
\textit{Existence theorem}: \\
Case 1 : Let $\mathbf{D} = \mathbf{D}_0$, $\widetilde{\mathbf{U}} = \delta_2^2$, and $\mathbf{F}_i = \delta_3^2$ (`stuck-at 0'). Structure matrix is reduced to
\begin{align*}
{\rm{L}}_{\widetilde{\mathbf{U}}\mathbf{F}_i} &= \delta_{16}[16,12,8,4,16,12,8,4, \\ 
&\hspace{1cm} 15,11,16,16,15,11,16,16],\\
{\rm{L}}_{\widetilde{\mathbf{U}}\mathbf{F}_0} &= \delta_{16}[14,10,6,2,16,12,8,4, \\
&\hspace{1cm} 13,9,13,13,15,11,16,16], \\
\text{ and } k &= \{1\}.
\end{align*}
For $k = 1$, $Tr \big( {\rm{L}}_{\widetilde{\mathbf{U}}\mathbf{F}_i} \vartriangle {\rm{L}}_{\widetilde{\mathbf{U}}\mathbf{F}_0} \big) = 1$. This ensures the existence of fault $\mathbf{F}_i = \delta_3^2$. \\
Case 2 : Let $\widetilde{\mathbf{U}} = \delta_2^1$. Keeping other parameters same, structure matrix is reduced to
\begin{align*}
{\rm{L}}_{\widetilde{\mathbf{U}}\mathbf{F}_i} &= \delta_{16}[16,\,12,\,8,\,4,\,16,\,12,\,8,\,4,\, \\
&\hspace{1cm} 15,\,11,\,8,\,8,\,15,\,11,\,8,\,8],\\
{\rm{L}}_{\widetilde{\mathbf{U}}\mathbf{F}_0} &= \delta_{16}[14,\,10,\,6,\,2,\,16,\,12,\,8,\,4,\, \\
&\hspace{1cm} 13,\,9,\,5,\,5,\,15,\,11,\,8,\,8], \\
\text{ and } k &= \{1,7\}.
\end{align*}
For $k = 1$, $Tr \big( {\rm{L}}_{\widetilde{\mathbf{U}}\mathbf{F}_i} \vartriangle {\rm{L}}_{\widetilde{\mathbf{U}}\mathbf{F}_0} \big) = 1$.  This ensures the existence of fault $\mathbf{F}_i = \delta_3^2$. \\
\textit{Improvement in observability} :  Improvement is not required as `stuck-at 0' fault is detectable. \\
\textit{Existence of intervention}: \\
Case 1: Let $\mathbf{U}_j = \delta_2^2$, and fault $\widetilde{\mathbf{F}} = \delta_3^2$ (`stuck-at 0') is detected. Let $\mathbf{D}_i = \delta_2^1$
\begin{align*}
{\rm{L}}_{\widetilde{\mathbf{F}}}\mathbf{U}_j\mathbf{D}_i &= \delta_{16}[12,12,4,4,12,12,4,4, \\
&\hspace{1cm} 11,11,16,16,11,11,16,16],\\
{\rm{L}}_{\mathbf{F}_0}\mathbf{U}_j\mathbf{D}_0 &= \delta_{16}[14,10,6,2,16,12,8,4, \\
&\hspace{1cm} 13,9,13,13,15,11,16,16], \\
\text{ and } k &= \{1\}.
\end{align*}
For $k =1$, $Tr \big({\rm{L}}_{\widetilde{\mathbf{F}}}\mathbf{U}_j\mathbf{D}_i \vartriangle  {\rm{L}}_{\mathbf{F}_0}\mathbf{U}_j\mathbf{D}_0 \big) = 1$. Therefore, drug $\mathbf{D}_i = \delta_2^1$ is not useful. \\
Case 2: Let $\mathbf{U}_j = \delta_2^1$, and fault $\widetilde{\mathbf{F}} = \delta_3^2$ (`stuck-at 0') is detected. Let $\mathbf{D}_i = \delta_2^1$
\begin{align*}
{\rm{L}}_{\widetilde{\mathbf{F}}}\mathbf{U}_j\mathbf{D}_i &= \delta_{16}[12,12,4,4,12,12,4,4, \\
&\hspace{1cm} 11,11,8,8,11,11,8,8],\\
{\rm{L}}_{\mathbf{F}_0}\mathbf{U}_j\mathbf{D}_0 &= \delta_{16}[14,10,6,2,16,12,8,4, \\
&\hspace{1cm} 13,9,5,5,15,11,8,8],\\
\text{ and } k &= \{1,7\}.
\end{align*}
For $k = \{1,7\}$, $Tr \big( ({\rm{L}}_{\widetilde{\mathbf{F}}}\mathbf{U}_j\mathbf{D}_i)^k \vartriangle  ({\rm{L}}_{\mathbf{F}_0}\mathbf{U}_j\mathbf{D}_0)^k \big) \neq 0$. Therefore, drug $\mathbf{D}_i = \delta_2^1$ is not useful. \\
\textit{Improvement in controllability}: Let the target for new drug $d_2$ be $ATM$, $p53$, or $Wip1$.  Applying Algorithm~2, it is observed that target $ATM$ shows some improvement in controllability. When $DNA\_dsb = 0$ and $p53$ is stuck-at 0, for target $ATM$, results obtained are as follows:\\
Assume ${\widetilde{\mathbf{F}}} = \delta_3^2$ and $\mathbf{U}_j = \delta_2^2$. Let $\mathbf{D}_3 = \delta_4^1$, $\mathbf{D}_2 = \delta_4^2$, $\mathbf{D}_1 = \delta_4^3$. Therefore,
\begin{align*}
{\rm{L}}_{\widetilde{\mathbf{F}}}\mathbf{U}_j\mathbf{D}_3 &= \delta_{16}[ 11,11,16,16,11,11,16,16, \\
&\hspace{1cm} 11,11,16,16,11, 11, 16,16],\\
{\rm{L}}_{\widetilde{\mathbf{F}}}\mathbf{U}_j\mathbf{D}_2 &= \delta_{16}[ 12,12,4,4,12,12,4,4, \\
&\hspace{1cm} 11,11,16,16,11,11,16,16],\\
{\rm{L}}_{\widetilde{\mathbf{F}}}\mathbf{U}_j\mathbf{D}_1 &= \delta_{16}[15,11,16,16,15,11,16,16, \\
&\hspace{1cm} 15,11,16,16,15,11,16,16],\\
{\rm{L}}_{\mathbf{F}_0}\mathbf{U}_j\mathbf{D}_0 &= \delta_{16}[14,10,6,2,16,12,8,4, \\
&\hspace{1cm} 13,9,13,13,15,11,16,16], \\
\text{ and } k &= \{1\}.
\end{align*}
For $k = 1$,
\begin{align*}
T_r \big({\rm{L}}_{\widetilde{\mathbf{F}}}\mathbf{U}_j\mathbf{D}_3 \vartriangle  {\rm{L}}_{\mathbf{F}_0}\mathbf{U}_j\mathbf{D}_0 \big) &= 0, \\
T_r \big({\rm{L}}_{\widetilde{\mathbf{F}}}\mathbf{U}_j\mathbf{D}_2 \vartriangle {\rm{L}}_{\mathbf{F}_0}\mathbf{U}_j\mathbf{D}_0 \big) &= 1, \\
T_r \big({\rm{L}}_{\widetilde{\mathbf{F}}}\mathbf{U}_j\mathbf{D}_1 \vartriangle {\rm{L}}_{\mathbf{F}_0}\mathbf{U}_j\mathbf{D}_0 \big) &= 0. \\
\end{align*}
Therefore drugs $\mathbf{D}_3$ and $\mathbf{D}_1$ are useful interventions. In logical equivalence, $D_3 \sim (1\,1)$ and $D_1 \sim (0\,1)$. Therefore, only drug $d_2$ is effective. This shows the effectiveness of the method in discarding certain drugs and selecting an appropriate therapeutic intervention. A suitable inhibitory drug at ATM may be useful for the stuck-at 0 fault at p53. This prediction shows that the method may prove to be helpful to decide the future research towards drugs discovery.}}  \exclose
\end{exmp}
\section{Conclusion}
The manuscript describes a linear approach towards fault analysis and intervention in Boolean systems. The methodology opens up new problems towards fault analysis and intervention in Boolean systems.
The proposed study considers the possibility of multiple faults (mutations) in feedback networks. The method does not require any test set considering the experimental difficulty in assigning test inputs to biological networks. 

The objective of this work is to obtain the optimal therapeutic intervention with the available input-output information. In some cases, extra reporters may be required to analyze the mutations. If the none of drugs are useful from the available set, the improvement in controllability procedure suggests the new possible targets for the drugs. Although the method is exponential, the drug estimation time is still less than the treatment time of the patient. Also, the drugs obtained with the proposed work can be used to improve the lifespan of a patient and save the experimentation cost. Future work can be on the output based fault identification and control of BCN and the corresponding improvement in observability and controllability which have been described in this study. Fault analysis and control in the paradigms of asynchronous Boolean networks and probabilistic Boolean networks can also be taken into consideration in the near future.
\section*{Acknowledgment}
The work was supported by the project ‘WBC’ funded by Indian Institute of Technology, Kharagpur.


\bibliographystyle{IEEEtran}

\bibliography{reference}

\vfill

\newpage
\appendix
\section{Preliminaries} \label{supsec:intro}
\subsection{Generalization of matrices}
\begin{lem}: \label{lem:power_red}
\textit{Power reduction matrix}\\
Let $\tv{a}$ be the column vector of dimension $p$ and $\tv{A} = \tv{a}_1\tv{a}_2\cdots \tv{a}_k$. Then $\tv{A}^2 = \Phi_p \tv{A}$, where 
\begin{equation*}
\Phi_p = \delta_{p^2}\big[1,1+(p+1),1+2(p+1),\hdots,1+(p-1)(p+1)\big].
\end{equation*}
\end{lem}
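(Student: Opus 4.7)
The plan is to verify $\tv{a}^2 = \Phi_p \tv{a}$ column-by-column on the delta basis of $\Delta_p$. Because $\Phi_p$ is specified explicitly as a delta-valued (logical) matrix and the STP of a column vector with itself reduces to an ordinary Kronecker product, the argument is essentially a matter of index comparison.

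First, I would fix a basis vector $\tv{a} = \delta_p^i$ with $i \in \{1,\ldots,p\}$ and compute the left-hand side. Since $\tv{a}$ has column dimension $1$ and row dimension $p$, the STP prescription gives $c = \mathrm{lcm}(1,p) = p$, so
\begin{equation*}
\tv{a} \ltimes \tv{a} \;=\; \tv{a} \otimes \tv{a} \;=\; \delta_p^i \otimes \delta_p^i \;=\; \delta_{p^2}^{(i-1)p + i}.
\end{equation*}

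Second, I would evaluate the right-hand side on the same vector. By the explicit form of $\Phi_p$, its $i$-th column is $\delta_{p^2}^{1 + (i-1)(p+1)}$, hence $\Phi_p \delta_p^i = \delta_{p^2}^{1 + (i-1)(p+1)}$. A one-line identity $1 + (i-1)(p+1) = (i-1)p + i$ then shows that both sides agree on every basis element of $\Delta_p$, which is enough because $\{\delta_p^i\}_{i=1}^p$ exhausts the logical vectors of dimension $p$.

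I do not anticipate any serious obstacle; the whole argument is combinatorial index-matching together with the observation that for a column vector the STP degenerates to a Kronecker product. The only point requiring care is tracking STP dimensions when the lemma is applied to the compound vector $\tv{A} = \tv{a}_1 \tv{a}_2 \cdots \tv{a}_k$: one must either reinterpret $\tv{A}$ as a delta vector of the appropriate composite dimension and apply the same column-wise calculation, or use STP associativity to peel off one factor at a time and reduce to the base case just proved.
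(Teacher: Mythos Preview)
The paper does not actually prove this lemma; it is stated without proof in the appendix as a standard STP identity (the power-reducing matrix is a known construction from the cited Cheng monograph). Your argument is correct and supplies exactly the elementary verification the paper omits: the index arithmetic $\delta_p^i \otimes \delta_p^i = \delta_{p^2}^{(i-1)p+i}$ matched against the $i$-th column $\delta_{p^2}^{1+(i-1)(p+1)}$ of $\Phi_p$ settles the base case on the nose.

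Your closing remark about the compound vector $\tv{A}$ is also apt, and worth making precise because the lemma as written is dimensionally ambiguous: if each $\tv{a}_i\in\Delta_p$ then $\tv{A}\in\Delta_{p^k}$, and a $p^2\times p$ matrix $\Phi_p$ cannot act on it. In every application the paper makes of this lemma (for instance $(\tv{UFD})^2 = \Phi_\omega\,\tv{UFD}$ with $\omega = 2^{\alpha+\lambda}3^\gamma$), the subscript on $\Phi$ is the \emph{full} dimension of the composite delta vector. So your first option --- treat $\tv{A}$ as a single element of $\Delta_{p^k}$ and rerun the base-case calculation with $p$ replaced by $p^k$ --- is the intended reading, and your proof covers it.
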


\begin{lem}: \label{lem:dummy}
\textit{Dummy matrix} \\
For column vectors $\tv{A} \in \mathbb{R}^p$ and $\tv{B} \in \mathbb{R}^q$, dummy matrix $\mv{E}_{[p,q]}$ is defined as
\begin{align*}
\mv{E}_{[p,q]} &= [\underbrace{\mv{I}_p \, \mv{I}_p \hdots \,\mv{I}_p}_{q \text{ times }}],
\end{align*}
such that $\tv{A} = \mv{E}_{[p,q]}\tv{B\,A}$ or $\tv{A} = \mv{E}_{[p,q]}\mv{W}_{[p,q]}\tv{A\,B}$. When $p = q$, it can be simply represented as $\mv{E}_{[p]}$.
\end{lem}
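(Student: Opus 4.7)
The plan is to verify the two displayed identities by a direct column-indexing argument, working in the standard semi-tensor product setting where $\tv{A} \in \Delta_p$ and $\tv{B} \in \Delta_q$ are delta (standard basis) vectors, since that is the framework in which these STP identities are routinely applied. First I would write $\tv{A} = \delta_p^i$ and $\tv{B} = \delta_q^j$ for some indices $i \in \{1,\dots,p\}$ and $j \in \{1,\dots,q\}$. Because the dimensions align under the Kronecker requirement, the STP collapses to the ordinary Kronecker product, and a short index computation gives $\tv{B}\tv{A} = \tv{B} \otimes \tv{A} = \delta_{pq}^{(j-1)p + i}$.

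Next I would read off the action of $\mv{E}_{[p,q]}$ from its defining block form. The matrix is $p \times pq$ and consists of $q$ horizontally concatenated copies of $\mv{I}_p$, so its $((j-1)p + i)$-th column is the $i$-th column of the $j$-th identity block, namely $\delta_p^i = \tv{A}$. Multiplication therefore yields $\mv{E}_{[p,q]}\tv{B}\tv{A} = \delta_p^i = \tv{A}$, establishing the first identity. The second identity then follows immediately by invoking the defining property of the swap matrix, $\mv{W}_{[p,q]}\tv{A}\tv{B} = \tv{B}\tv{A}$, which is already recorded in the preliminaries; substituting into the first identity gives $\mv{E}_{[p,q]}\mv{W}_{[p,q]}\tv{A}\tv{B} = \mv{E}_{[p,q]}\tv{B}\tv{A} = \tv{A}$. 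The collapse $\mv{E}_{[p]} := \mv{E}_{[p,p]}$ when $p=q$ is pure notation and needs no separate verification.

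The only real subtlety, and the main step that warrants care, is fixing the STP convention so that the column-index formula $(j-1)p + i$ agrees with the tiling pattern of $\mv{E}_{[p,q]}$ (as opposed to, say, $(i-1)q + j$, which would force $\mv{E}_{[p,q]}$ to be built from copies of $\mv{I}_p$ permuted differently). Once that convention is nailed down, both identities reduce to a one-line check, so I do not anticipate any genuine obstacle in executing the proof.
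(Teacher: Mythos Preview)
The paper does not supply a proof of this lemma; it is stated in the Appendix as a standard STP identity (essentially a definition of the dummy matrix together with its characteristic properties) and is invoked later without justification. Your verification is therefore more than the paper itself offers, and the column-indexing argument you outline is correct and is the standard way to check such identities.

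One point worth making explicit: you were right to restrict to $\tv{A}\in\Delta_p$, $\tv{B}\in\Delta_q$. As literally stated in the lemma with $\tv{A}\in\mathbb{R}^p$, $\tv{B}\in\mathbb{R}^q$, the identity $\mv{E}_{[p,q]}(\tv{B}\otimes\tv{A})=\tv{A}$ fails in general, since the left-hand side equals $\big(\sum_j b_j\big)\tv{A}$; it holds precisely when the entries of $\tv{B}$ sum to $1$, which is automatic for delta vectors. Since every application in the paper is to delta vectors, your restriction matches the intended scope, and your argument goes through cleanly.
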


\begin{defn} (see \cite{Cheng2010b})
If $\tv{A}$ is a vector of dimension $k$ and $\tv{B} = \mv{M}_1\tv{A}\mv{M}_2\tv{A}\cdots$ $\mv{M}_q\tv{A}$, then $\tv{B}$ can be written in simplified form as $\tv{B} = \mv{M}^\ast \tv{A}$, where
\begin{equation*}
\mv{M}^\ast = \mv{M}_1 \, \underset{i=2}{\stackrel{q}{\prod}} \Big[\big( \mv{I}_{k} \otimes \mv{M}_i)\Phi_k \Big].
\end{equation*}
\end{defn}

\subsection{Derivation of structure matrix for faults} \label{supsec:faultdrug}
Assume that the fault is expected to strike at some location $x$ which changes its value to $x^\ast$. As $x \in \{0,1\}$, $\tv{x} \in \Delta_2$. Similarly, $\tv{x}^\ast \in \Delta_2$. Using STP, this condition can be written in linear form as:
\begin{align}
\tv{x}^\ast &= \mv{M}_f \tv{x\,f}.	\label{eq:fault_intervention}
\end{align}
In this expression, $\mv{M}_f \in \sv{B}_{2 \times 6}$ is the structure matrix for the fault model. For $x = 1$ and sa-1 appears at $f$, then $\tv{x} = \delta_2^1$ and $\tv{f} = \delta_3^1$. Hence, $\tv{x\,f} = \delta_6^1$. For this combination of $x$ and $f$, the value $x^\ast = 1$. Therefore, $\mv{M}_f \delta_6^1 = \delta_2^1$ and $\delta_2^1$ gives the first column of the structure matrix $\mv{M}_f$. Similarly, $\mv{M}_f$ can be constructed from all combinations of $x$ and $f$. This process gives the structure matrix $\mv{M}_f= \delta_2[1, 2, 1, 1, 2, 2]$.
%
%

\subsection{Derivation of structure matrix for drugs}
Assume that a drug $d$ be applied at location $x$ and its value is modified to $x^\ast$ on the application of the drug. The equation for drug intervention in logical form is given as:
\[
x^\ast  = x \wedge (\neg d).
\]
This equation can be written in linear form as:
\begin{align}
\tv{x}^\ast =  \mv{M}_c \tv{x} (\mv{M}_n \tv{d}) = \mv{M}_c (\mv{I}_2 \otimes \mv{M}_n) \tv{x \, d} = \mv{M}_D \tv{x \, d}. \label{eq:drug_intervention}
\end{align}
In this expression, $\mv{M}_D \in \sv{B}_{2 \times 4}$ is the structure matrix for the drug intervention, $\mv{I}_2$ is an identity matrix of size $2 \times 2$. $\mv{M}_c = \delta_2[1,2,2,2]$ and $\mv{M}_n = \delta_2[2,1]$ are the structure matrices for logical AND operation and logical NOT operation respectively. These matrices are available in \cite{Cheng2010b}. The structure matrix $\mv{M}_D$ is given by:
\begin{equation*}
\mv{M}_D = \mv{M}_c (\mv{I}_2 \otimes \mv{M}_n) = \delta_2[2, 1, 2, 2].
\end{equation*}

\section{Derivation of structure matrix for Boolean maps} \label{supsec:BMder}
In the manuscript, Fig.~1 represents the block diagram of Boolean control networks (BCN). For Boolean Maps (BM), state feedback link is removed. Procedure to estimate structure matrix $\mv{H}$ in equation~(8) of the manuscript is as follows:

Output at level $i$ ($X^i$) in primary block is dependent of overall structure matrix of that level (let us call it $\mv{L}^i$), and external inputs $U$, present faults $F$, and drugs $D$. In linear form, assume that output $\tv{X}^i$ given by:
\begin{align}
\tv{X}^i &= \mv{L}^i \tv{U F D}. \nonumber
\end{align}
From~Fig.~1 in manuscript, equation of individual output for each block $j$ in level $i$ can be written as:
\begin{align}
\tv{x}^i_j &= \mv{L}^i_j \tv{UFDX}^1\cdots\,\tv{X}^{i-1},  \label{eq:XijGeneral}
\intertext{where $i =1,\hdots,m$ and $j = 1,\hdots,n_i$. Using earlier assumption,}
\tv{x}^i_j &= \mv{L}^i_j \tv{UFD} (\mv{L}^1\tv{UFD})\cdots (\mv{L}^{i-1} \tv{U F D}). \nonumber
\intertext{From Lemma~\ref{lem:power_red}, $(\tv{UFD})^2 = \Phi_\omega \tv{UFD}$ and $\omega = 2^{\alpha+\lambda}3^\gamma$. Therefore,}
\tv{x}^i_j	 & = \mv{L}^i_j \, \underset{k=1}{\stackrel{i-1}{\prod}} \Big[\big( \mv{I}_{\omega} \otimes \mv{L}^k \big)\Phi_\omega \Big] \tv{UFD}  = \mv{L}^{i\ast}_j \tv{UFD}, \label{eq:Xij} \\
\text{where } 
\mv{L}^{i\ast}_{j} & = \mv{L}^i_j \, \underset{k=1}{\stackrel{i-1}{\prod}} \Big[\big(\mv{I}_{\omega} \otimes \mv{L}^k \big)\Phi_\omega \Big]. \nonumber \\
\intertext{Output vector at level $i$ is given by}
\tv{X}^i &= \tv{x}^i_1\,\tv{x}^i_2\,\cdots\,\tv{x}^i_{n_i} \nonumber \\
	   &= (\mv{L}^{i\ast}_1\tv{UFD})\,(\mv{L}^{i\ast}_2\tv{UFD})\,\cdots\,(\mv{L}^{i\ast}_{n_i}\tv{UFD})  \nonumber \\
     &= \mv{L}^i \tv{UFD}, \label{eq:vect_opXi}
\end{align}
where $\mv{L}^i = \mv{L}^{i\ast}_1\, \underset{j=2}{\stackrel{n_i}{\prod}} \Big[\big(\mv{I}_{\omega} \otimes \mv{L}^{i\ast}_j \big)\Phi_\omega\Big] \nonumber$.

The resultant output vector of all the $m$ levels in primary block is STP of the outputs of individual levels in that block. Therefore,
\begin{align}
\tv{X} &= \tv{X}^1 \tv{X}^2 \cdots \tv{X}^m. \label{eq:X1}\\
\intertext{Substituting values using equation~(\ref{eq:vect_opXi}),}
\tv{X} &= (\mv{L}^1\tv{UFD})(\mv{L}^2 \tv{UFD})\cdots (\mv{L}^m \tv{UFD}) \nonumber \\
&= \mv{L} \tv{UFD}, \label{eq:X2} \\
\text{where }
\mv{L} &= \mv{L}^1\, \underset{j=2}{\stackrel{m}{\prod}} \Big[\big(\mv{I}_{\omega} \otimes \mv{L}^j \big)\Phi_\omega\Big]. \label{eq:L}
\end{align}

Primary output vector $\tv{Y}$ of the BM depends on the secondary level. Inputs of secondary level are output vector from primary block and primary inputs of BM. Therefore equation of primary output vector is derived as:
\begin{align}
\tv{y}_1 &= \mv{H}_1\tv{UX}. \nonumber \\
\intertext{Substituting $\tv{X}$ from equation~(\ref{eq:X2}),}
\tv{y}_1 &= \mv{H}_1 \tv{U(LUFD)} \nonumber \\
&= \mv{H}_1^\ast \tv{UFD}, \label{eq:PO_single} \\ 
\intertext{where $\mv{H}_1^\ast = \mv{H}_1 \big(\mv{I}_{2^\alpha} \otimes \mv{L} \big)\Phi_{2^\alpha}$. Similarly, other primary outputs are defined as:}
\tv{y}_2 &= \mv{H}_2^\ast \tv{UFD} \nonumber \\
&\hspace{0.2cm}\vdots \nonumber \\
\tv{y}_\beta &= \mv{H}_\beta^\ast \tv{UFD}. \nonumber
\end{align}
Output vector $\tv{Y}$ is given as:
\begin{align}
\tv{Y} &= \tv{y}_1\,\tv{y}_2\,\cdots\,\tv{y}_\beta \nonumber \\
&= (\mv{H}_1^\ast \tv{UFD})(\mv{H}_2^\ast \tv{UFD})\cdots (\mv{H}_\beta^\ast \tv{UFD}) \nonumber \\
&= \mv{H}_1^\ast\, \underset{j=2}{\stackrel{\beta}{\prod}} \Big[\big(\mv{I}_{\omega} \otimes \mv{H}_j^\ast \big)\Phi_\omega\Big] \tv{UFD} \nonumber \\
&= \mv{H}\tv{UFD}, \label{eq:supPO}\\
\intertext{where $\mv{H}= \mv{H}_1^\ast\, \underset{j=2}{\stackrel{\beta}{\prod}} \Big[\big(\mv{I}_{\omega} \otimes \mv{H}_j^\ast \big)\Phi_\omega\Big] \nonumber$.}
\end{align}
\vspace{-2cm}
\section{Derivation for improvement in controllability in Boolean maps} \label{supsec:control}
Let intermediate output $x^m_1$ is target location of drug $d_{\lambda+1}$. Then $\tv{x}^m_1$ is modified as, $\tv{x}^{m\ast}_1 = \mv{M}_D \tv{x}^m_1 \tv{d}_{\lambda+1}$. Therefore from equations (\ref{eq:Xij}) and (\ref{eq:vect_opXi}), output of level $m$ is given by:
\allowdisplaybreaks
\begin{align}
\tv{X}^m &= \tv{x}^{m\ast}_1\, \tv{x}^m_2\,\cdots\,\tv{x}^m_{n_m} \nonumber \\
		 &= (\mv{M}_D \tv{x}^{m}_1 \tv{d}_{\lambda+1})\, \tv{x}^m_2\,\cdots\,\tv{x}^m_{n_m} \nonumber \\
&= (\mv{M}_D\,\mv{L}^{m\ast}_1\,\tv{UFD\,d}_{\lambda+1})\,\mv{L}^{m\ast}_2 \tv{UFD}\cdots\,\mv{L}^{m\ast}_{n_m}\tv{UFD} \nonumber \\
&=  (\mv{M}_D\,\mv{L}^{m\ast}_1\,\tv{UF}\hat{\tv{D}})\,\mv{L}^{m\ast}_2\tv{UFD}\cdots\,\mv{L}^{m\ast}_{n_m}\tv{UFD}. \label{eq:vector_XNew}
\end{align}
Addition of new drug $d_{\lambda+1}$ requires to change STP of input $\tv{UFD}$ as:
\begin{align*}
\tv{UFD} &= \tv{UFd}_1\tv{d}_2\cdots \tv{d}_{\lambda} \\
&=  \tv{UFd}_1 \tv{d}_2\cdots \tv{d}_{\lambda-1}\mv{E}_{[2]}\mv{W}_{[2]}\tv{d}_{\lambda}\tv{d}_{\lambda+1} \\
&= \big(\mv{I}_{\frac{\omega}{2}} \otimes \mv{E}_{[2]}\mv{W}_{[2]}\big) \tv{UFd}_1 \tv{d}_2\cdots \tv{d}_{\lambda+1} \\
&= \big(\mv{I}_{\frac{\omega}{2}} \otimes \mv{E}_{[2]}\mv{W}_{[2]} \big)\tv{UF}\hat{\tv{D}}. 
\end{align*}
Substituting in equation~(\ref{eq:vector_XNew}),
\begin{align}
\tv{X}^m &= (\mv{M}_D\,\mv{L}^{m\ast}_1\,\tv{UF}\hat{\tv{D}})\,\mv{L}^{m\ast}_2\big(\mv{I}_{\frac{\omega}{2}} \otimes \mv{E}_{[2]} \mv{W}_{[2]}\big) \tv{UF}\hat{\tv{D}} \nonumber \\
&~\quad \cdots\,\mv{L}^{m\ast}_{n_m}\big(\mv{I}_{\frac{\omega}{2}} \otimes \mv{E}_{[2]}\mv{W}_{[2]} \big)\tv{UF}\hat{\tv{D}} \nonumber \\
&= (\hat{\mv{L}}^{m\ast}_1 \tv{UF}\hat{\tv{D}})(\hat{\mv{L}}^{m\ast}_2 \tv{UF}\hat{\tv{D}})\cdots (\hat{\mv{L}}^{m\ast}_{n_m}\tv{UF}\hat{\tv{D}}) \nonumber \\
&= \hat{\mv{L}}^{m\ast}_1 \underset{j=2}{\stackrel{n_m}{\prod}} \Big[\big(\mv{I}_{2\omega} \otimes \hat{\mv{L}}^{m\ast}_j \big)\Phi_{2\omega}\Big] \tv{UF}\hat{\tv{D}} \nonumber \\
&= \hat{\mv{L}}^m \tv{UF}\hat{\tv{D}}, \nonumber
\end{align}
where
\begin{equation*}
\hat{\mv{L}}^{m\ast}_i =
\begin{cases}
\mv{M}_D\,\mv{L}^{m\ast}_i, &  i = 1 \\
\mv{L}^{m\ast}_i\big(\mv{I}_{\frac{\omega}{2}} \otimes \mv{E}_{[2]}\mv{W}_{[2]} \big), & \text{otherwise} 
\end{cases}
\end{equation*}
and $\hat{\mv{L}}^m = \hat{\mv{L}}^{m\ast}_1 \underset{j=2}{\stackrel{n_m}{\prod}} \Big[\big(\mv{I}_{2\omega} \otimes \hat{\mv{L}}^{m\ast}_j \big)\Phi_{2\omega}\Big] $.
 
Similarly, any drug location can be selected from outputs of levels in primary block. Generalized form of output vector for level $j$ for target location $x_i^j$ is given by:
\begin{align}
\tv{X}^j &= \tv{x}_1^j \tv{x}_2^j \cdots \tv{x}_{n_i}^j \nonumber \\
 &= \hat{\mv{L}}^{j\ast}_1 \underset{k=2}{\stackrel{n_j}{\prod}} \Big[\big(\mv{I}_{2\omega} \otimes \hat{\mv{L}}^{j\ast}_k \big)\Phi_{2\omega}\Big] \tv{UF}\hat{\tv{D}} \nonumber \\
&= \hat{\mv{L}}^j \tv{UF}\hat{\tv{D}}, \nonumber 
\end{align}
where
\begin{equation*}
\hat{\mv{L}}^{l\ast}_k =
\begin{cases}
\mv{M}_D\,\mv{L}^{l\ast}_k, &  l = j \text{ and } k = i  \\
\mv{L}^{l\ast}_k\big(\mv{I}_{\frac{\omega}{2}} \otimes \mv{E}_{[2]}\mv{W}_{[2]} \big), & l \neq j \text{ or } k \neq i 
\end{cases}
\end{equation*}
and $\hat{\mv{L}}^j = \hat{\mv{L}}^{j\ast}_1 \underset{k=2}{\stackrel{n_j}{\prod}} \Big[\big(  I_{2\omega} \otimes \hat{\mv{L}}^{j\ast}_k \big)\Phi_{2\omega}\Big]$. \\
From equation~(\ref{eq:X1}), equation of output for primary block is given by:
\begin{align}
\tv{X} &= \tv{X}^1\tv{X}^2\cdots \tv{X}^m \nonumber \\
 &= (\hat{\mv{L}}^1\tv{UF}\hat{\tv{D}})(\hat{\mv{L}}^2\tv{UF}\hat{\tv{D}})\cdots (\hat{\mv{L}}^m \tv{UF}\hat{\tv{D}}) \nonumber \\
 &= \hat{\mv{L}}^1 \underset{j=2}{\stackrel{m}{\prod}} \Big[\big(\mv{I}_{2\omega} \otimes \hat{\mv{L}}^{j}\big)\Phi_{2\omega}\Big] \tv{UF}\hat{\tv{D}} \nonumber \\
 &= \widehat{\mv{L}}\tv{UF}\hat{\tv{D}}, \label{eq:X2_new} \\
 \text{where }
 \widehat{\mv{L}} &= \hat{\mv{L}}^1 \underset{j=2}{\stackrel{m}{\prod}} \Big[\big(\mv{I}_{2\omega} \otimes \hat{\mv{L}}^{j}\big)\Phi_{2\omega}\Big]. \label{eq:L_new}
\end{align}

Let us divide the problem in two parts.
\begin{enumerate}
\item \textit{The new drug location is selected from primary block:} The structure matrix of primary block is represented as  equation~(\ref{eq:L_new}). Similar to equations~(\ref{eq:PO_single}) and (\ref{eq:supPO}), net structure matrix of BN can be estimated as: 
\begin{align}
\hat{\mv{H}}^\ast_j &= \mv{H}_j \big(\mv{I}_{2^\alpha} \otimes \widehat{\mv{L}} \big)\Phi_{2^\alpha} \tv{UF}\hat{\tv{D}} \nonumber\\  
\widehat{\mv{H}} &= \hat{\mv{H}}_1^\ast\, \underset{j=2}{\stackrel{\beta}{\prod}} \Big[\big(\mv{I}_{2\omega} \otimes \hat{\mv{H}}_j^\ast \big)\Phi_{2\omega}\Big] \tv{UF}\hat{\tv{D}}\nonumber \\ 
\tv{Y} &= \widehat{\mv{H}} \tv{UF}\hat{\tv{D}}. \label{eq:PO_new1}
\end{align}
\item \textit{The new drug location is selected from primary outputs:}
If target location is selected one of the primary outputs $y_j$, the structure matrix $\mv{L}$ of primary blocks remains unaffected. Therefore from equations~(\ref{eq:PO_single}) and (\ref{eq:supPO}), equation of primary output can be computed as: 
\begin{align}
\hat{\mv{H}}_k^\ast &= 
\begin{cases}
\mv{M}_D \mv{H}_k \big(\mv{I}_{2^\alpha} \otimes \mv{L} \big)\Phi_{2^\alpha}, &  k = j  \\
\mv{H}_k \big(\mv{I}_{2^\alpha} \otimes \mv{L} \big)\Phi_{2^\alpha}\big(\mv{I}_{\frac{\omega}{2}} \otimes \mv{E}_{[2]}\mv{W}_{[2]} \big), & \text{otherwise }
\end{cases} \nonumber
\end{align}
and
\begin{align}
\widehat{\mv{H}}& = \hat{\mv{H}}_1^\ast\, \underset{j=2}{\stackrel{\beta}{\prod}} \Big[\big(\mv{I}_{2\omega} \otimes \hat{\mv{H}}_j^\ast \big)\Phi_{2\omega}\Big] \nonumber \\
\tv{Y} &= \widehat{\mv{H}}\tv{UF}\hat{\tv{D}}. \label{eq:PO_new2}
\end{align}
\end{enumerate}
The final form of equations~(\ref{eq:PO_new1}) and (\ref{eq:PO_new2}) is shown as equation~(10) in the manuscript.
\vfill
\end{document}